\newcommand{\todo}[1][\null]{\ensuremath{\clubsuit}}
\newcommand{\noprint}[1]{}
\newcommand{\checked}[1][\null]{\ensuremath{\boldsymbol{\surd}}}
\newcommand{\p}{\partial}
\newcommand{\ord}{\mathop{\rm ord}\nolimits}
\newcommand{\lsemioplus}{\mathbin{\mbox{$\lefteqn{\hspace{.77ex}\rule{.4pt}{1.2ex}}{\in}$}}}
\newtheorem{theorem}{Theorem}
\newtheorem{lemma}[theorem]{Lemma}
\newtheorem{corollary}[theorem]{Corollary}
\newtheorem{proposition}[theorem]{Proposition}
\newtheorem*{problem*}{Problem}
{\theoremstyle{definition}

\newtheorem{remark}[theorem]{Remark}
\newtheorem*{remark*}{Remark}
}
\begin{document}

\par\noindent {\LARGE\bf
Generalized symmetries and conservation laws\\ of (1+1)-dimensional Klein--Gordon equation
\par}

\vspace{4mm}\par\noindent {\large Stanislav Opanasenko$^{\dag\S}$ and Roman O.\ Popovych$^{\ddag\S}$
} \par\vspace{2mm}\par

\vspace{2mm}\par\noindent {\it
$^{\dag}$~Department of Mathematics and Statistics, Memorial University of Newfoundland,\\
$\phantom{^{\dag}}$~St.\ John's (NL) A1C 5S7, Canada\par
}
\vspace{2mm}\par\noindent {\it
$^\ddag$~Fakult\"at f\"ur Mathematik, Universit\"at Wien, Oskar-Morgenstern-Platz 1, 1090 Wien, Austria
$\phantom{^\S}$~\& Mathematical Institute, Silesian University in Opava, Na Rybn\'\i{}\v{c}ku 1, 746 01 Opava,\\
$\phantom{^\S}$~Czech Republic\par
}
\vspace{2mm}\par\noindent {\it
$^\S$~Institute of Mathematics of NAS of Ukraine, 3 Tereshchenkivs'ka Str., 01004 Kyiv, Ukraine

}

\noprint{
Keywords:
generalized symmetry
local conservation law
the Klein--Gordon equation
variational symmetry
Lie symmetry
Noether theorem

35-XX   Partial differential equations
 35Bxx  Qualitative properties of solutions
  35B06   Symmetries, invariants, etc.
37-XX   Dynamical systems and ergodic theory [See also 26A18, 28Dxx, 34Cxx, 34Dxx, 35Bxx, 46Lxx, 58Jxx, 70-XX]
 37Kxx  Infinite-dimensional Hamiltonian systems [See also 35Axx, 35Qxx]
  37K05   Hamiltonian structures, symmetries, variational principles, conservation laws
35-XX	Partial differential equations
 35Lxx	  Hyperbolic equations and systems
  35L10  	Second-order hyperbolic equations
  35L65  	Conservation laws
}

\vspace{2mm}\par\noindent
\textup{E-mail:} sopanasenko@mun.ca, rop@imath.kiev.ua
\par

\vspace{8mm}\par\noindent\hspace*{10mm}\parbox{140mm}{\small
Using advantages of nonstandard computational techniques based on the light-cone variables,
we explicitly find the algebra of generalized symmetries of the (1+1)-dimensional Klein--Gordon equation.
This allows us to describe this algebra in terms of the universal enveloping algebra
of the essential Lie invariance algebra of the Klein--Gordon equation.
Then we single out variational symmetries of the corresponding Lagrangian
and compute the space of local conservation laws of this equation,
which turns out to be generated, up to the action of generalized symmetries,
by a single first-order conservation law.
Moreover, for every conservation law we find a conserved current of minimal order
that is contained in this conservation law.
\looseness=-1
\par}\vspace{4mm}

\section{Introduction}

Noether's idea of generalizing the notion of Lie symmetries of systems of differential equations
was to allow components of vector fields to depend on derivatives of unknown functions,
which led to the notion of generalized (or higher) symmetries \cite{Bocharov1999,Olver1993}.
In this way, symmetries lose their geometric charm but become a powerful tool, e.g.,
for finding, with Noether's theorem, conservation laws of systems that are systems of Euler--Lagrange equations for some Lagrangians.
Although the general procedure of finding generalized symmetries is similar to its counterpart for Lie symmetries, 
computational difficulty increases rapidly as the order of symmetries to be found increases. 
Even low-order generalized symmetries are hard to compute for multidimensional systems of differential equations,
in spite of the possibility of using specialized computer algebra packages~\cite{BaranMarvan,Cheviakov2007} in such computations.
The situation with (local) conservation laws is alike,
see for instance remarks in~\cite{CheviakovOberlack2014} on computational complexity of the problem
on conservation laws of the Euler and the Navier--Stokes equations of order less than or equal to two.
Besides, given a system of differential equations,
a computer cannot handle the construction of all generalized symmetries or conservation laws of this system
unless there exist upper bounds on their orders, and these bounds are quite low and are found independently.
In view of this, the complete descriptions of generalized symmetries and/or of conservation laws are known
for not so many systems of differential equations important for real-world applications
as may be expected, taking into account the intensive research activity in the related~field.

The above approach with computing the upper bound of orders of generalized symmetries, cosymmetries or conservation laws
was applied for a number of systems of differential equations for which such bounds exist.
This includes conservation laws of the BBM equations~\cite{DuzhinTsujishita1984,Olver1979b},
of the $k$-$\varepsilon$ turbulence model~\cite{Khor'kovaVerbovetsky1995},
of (1+1)-dimensional even-order linear evolution equations~\cite[Corollary~6]{PopovychSergyeyev2010}
and of the equation $u_t=u_{xxx}+xu$~\cite[Example~6]{PopovychSergyeyev2010},
the classification of conservation laws of second-order evolution equations~\cite{PopovychSamoilenko2008} up to contact equivalence,
generalized symmetries of the Bakirov system~\cite{Sergyeyev2001} 
as well as
generalized symmetries and conservation laws 
of the Navier--Stokes equations~\cite{GusyatnikovaYumaguzhin1989},
of the (1+3)-dimensional, (1+2)-dimensional and axisymmetric Khokhlov--Zabolotskaya equations~\cite{Sharomet1989},
of non-integrable compacton $K(m,m)$-equations~\cite{Vodova2013}
and of generalized Kawahara equations~\cite{Vasicek2020}.
There exist no more or less general results on such upper bounds,
except the well-known upper bound for orders of conservation laws of even-order \mbox{(1+1)-dimensional} evolution equations
and the extension of this bound in~\cite{Igonin2002} to a wider class of systems of differential equations.

For (integrable) systems admitting (co)symmetries of arbitrary high order,
it may be possible to find recursion operators~\cite{KrasilshchikVerbovetskyVitolo2017,Olver1977,Olver1993,Sergyeyev2017}
for symmetries and/or for cosymmetries with subsequent determining which cosymmetries are associated with conservation laws.
At the same time, recursion operators are not guaranteed to yield all (co)symmetries
and so there remains a problem of proving nonexistence of other (co)symmetries.
Another point is that recursion operators do not always generate local objects,
with generalized symmetries of the Korteweg--de Vries equation and the Lenard recursion operator~\cite{GardnerGreeneKruskalMiura1974}
as an example here, so it is necessary to pick the local ones post factum
or prove that the generated hierarchy is local~\cite{Sergyeyev2005}.
Amongst known examples of complete descriptions of generalized symmetries and conservation laws
for systems admitting such objects of arbitrarily high order are those 
for the Korteweg--de Vries equation~\cite{KruskalMiuraGardnerZabusky1970,IbragimovShabat1979,MagadeevSokolov1981,Tsujishita1982},
for its linear counterpart $u_t=u_{xxx}$~\cite[Example~5]{PopovychSergyeyev2010}, 
of the vacuum Einstein equations in the four-dimensional spacetime~\cite{AndersonTorre1996},
for free Maxwell's equations in (3+1)-dimensional Minkowski space~\cite{AncoPohjanpelto2001,AncoPohjanpelto2004}, 
for massless free fields of spin $s\geqslant1/2$ \cite{AncoPohjanpelto2003,PohjanpeltoAnco2008} and 
for an isothermal no-slip drift flux model~\cite{OpanasenkoBihloPopovychSergyeyev2020b}. 
All the generalized symmetries of the Yang--Mills equations on Minkowski space with a semi-simple structure group 
were computed in~\cite{Pohjanpelto2004}.
Symmetry operators of the one-dimensional Schr\"odinger equation were studied in~\cite{FushchichNikitin1994,NikitinOnufriichukFuschchich1992}.
See also~\cite{DubrovinNovikov1989,Tsarev1991} for a general theory of hydrodynamic systems,
where infinite hierarchies of conservation laws and symmetries though often nonlocal are common,
and~\cite{MorozovSergyeyev2014,Sergyeyev2017,Sergyeyev2018,Vinogradov1989} for some related examples.

In the present paper, we exhaustively describe generalized symmetries and local conservation laws
of the (1+1)-dimensional (real) Klein--Gordon equation, which takes, in natural units, the~form
\[
\square u+m^2u=0,
\]
where $u$ is the real-valued unknown function of the real independent variables~$x_0$ and~$x_1$,
$\square$~is the d'Alembert operator in (1+1) dimensions, $\square=\p^2/\p x_0^2-\p^2/\p x_1^2$,
and $m$ denotes the nonzero mass parameter.%
\footnote{%
The zero value of~$m$, which corresponds to the wave equation, is singular in all properties
related to symmetry analysis of differential equations,
including Lie, contact and generalized symmetries and conservation laws; 
cf.\ \cite[Section~18.4]{Ibragimov1985} and~\cite{PopovychCheviakov2020}.
}
Without loss of generality, the mass parameter can be set to be equal one
by simultaneous scaling of the independent variables.
We work with this equation in the characteristic, or light-cone, variables
(the change of variables $x=(x_0+x_1)/2$, $y=(-x_0+x_1)/2$ does the job),
\begin{gather*}
\mathcal L\colon\quad u_{xy}=u.
\end{gather*}
In what follows we use the same notation~$\mathcal L$ for the solution set of the equation~$\mathcal L$
as well as for the set defined by~$\mathcal L$ and its differential consequences in the corresponding infinite-order jet space.
Our specific interest to the equation~$\mathcal L$ originated from the study of
the hydrodynamic-type system~$\mathcal S$ of differential equations
modeling an isothermal no-slip drift flux \cite{OpanasenkoBihloPopovychSergyeyev2020a,OpanasenkoBihloPopovychSergyeyev2020b}.
It turned out that the (nonlinear) system~$\mathcal S$ is reduced to the (linear) equation~$\mathcal L$
by the composition of a simple point transformation
and a rank-two hodograph transformation.
The family of regular solutions of~$\mathcal S$
is parameterized by an arbitrary solution of~$\mathcal L$ and by an arbitrary function of a single argument.
Moreover, finding generalized symmetries and local conservation laws of the system~$\mathcal S$
reduces to the analogous problems for the equation~$\mathcal L$.
At the same time, we did not find exhaustive and trusted solutions of the latter problems in the literature,
which motivated our study of the Klein--Gordon equation.

The Lie invariance algebra~$\mathfrak g$ of the equation~$\mathcal L$ was computed by Sophus Lie himself
in the course of the group classification of second-order linear equations with two independent variables~\cite[Section~9]{Lie1881}.
The equation~$\mathcal L$ appeared there
as the simplest particular member of a parameterized family of inequivalent equations
that admit three-dimensional Lie-symmetry extensions in comparison with the general case.%
\footnote{%
The same classification case was represented in \cite[Section~9.6]{Ovsiannikov1982} by
another family, which is similar to the family singled out by Lie
with respect to a point transformation but is more cumbersome.
Under this representation, the relation of the Klein--Gordon equation to Lie-symmetry extensions
within the class of second-order linear equations with two independent variables
is not so obvious as in Lie's paper~\cite{Lie1881}.
}
The algebra~$\mathfrak g$ is spanned by the vector fields
\[
\p_x,\ \p_y,\ x\p_x-y\p_y,\ u\p_u,\ f(x,y)\p_u,
\]
where the function $f=f(x,y)$ runs through the solution set of~$\mathcal L$.
This algebra is represented as the semidirect sum,
$\mathfrak g=\mathfrak g^{\rm ess}\lsemioplus\mathfrak g^\infty$,
of the so-called (finite-dimensional) essential Lie invariance subalgebra
$\mathfrak g^{\rm ess}:=\langle\p_x,\,\p_y,\,x\p_x-y\p_y,\,u\p_u\rangle$
and the (infinite-dimensional) Abelian ideal
\mbox{$\mathfrak g^\infty:=\langle f(x,y)\p_u,\,f\in\mathcal L\rangle$}
related to the linear superposition of solutions of~$\mathcal L$.
Note that Sophus Lie carried out the group classification over the complex field under supposing
all objects, like equation coefficients and components of vector fields, to be analytic.
This is why his results are directly extended to hyperbolic equations over the real field.

Since the equation~$\mathcal L$ is the Euler--Lagrange equation of the Lagrangian $\mathrm L=-(u_xu_y+u^2)/2$,
its local conservation laws can be constructed using Noether's theorem.
Conservation laws associated with essential variational Lie symmetries of the Lagrangian~$\mathrm L$ are well known
and admit an obvious physical interpretation.
These are the conservations of energy-momentum and of relativistic angular momentum,
which are respectively related, via Noether's theorem,
to spacetime translations and to Lorentz transformations, which include, in dimension 1+1, only Lorentz boosts;
see~\cite{Torre2016} for a nice pedagogical presentation.

In the course of a general discussion of quadratic conserved quantities in free-field theories in~\cite{Kibble1965},
it was shown that the (1+3)-dimensional Klein--Gordon equation possesses
an infinite-dimensional space of conservation laws with conserved currents
whose components are quadratic expressions in derivatives of the dependent variable with constant coefficients;
in fact, the specific dimension (1+3) is not essential in this result.
Tsujishita~\cite{Tsujishita1979} proved that for the $(1+n)$-dimensional Klein--Gordon equation with $n\geqslant 2$,
this space coincides with the space of conservation laws containing the conserved currents
whose components are differential polynomials with constant coefficients;
see also~\cite{Tsujishita1982} and references therein.
At the same time, the Klein--Gordon equation obviously possesses other conservation laws.
There are such conservation laws even among conservation laws associated with Lie variational symmetries of the corresponding Lagrangian,
e.g., the conservations of relativistic angular momentum.

Having generalized the notion of Killing vector, in~\cite{Nikitin1991}
Nikitin introduced the notions of generalized Killing tensors and generalized conformal Killing tensors
of arbitrary rank and arbitrary order
in the $(p{+}q)$-dimensional pseudo-Euclidean space~$\mathbb R^{p,q}$ of signature $(p,q)$
with arbitrary $p,q\in\mathbb N_0:=\mathbb N\cup\{0\}$, $p+q>1$.
The explicit form of these tensors was found therein and
then used for the study of linear symmetry operators of the Klein--Gordon--Fock equation in~$\mathbb R^{p,q}$.
See also~\cite{NikitinPrylypko1990} for a more detailed exposition of the above results
and~\cite{FushchichNikitin1994}, where a number of results on linear symmetry operators
of linear systems of differential equations arising as models in quantum mechanics are collected.

Shapovalov and Shirokov stated in~\cite{ShapovalovShirokov1992} that for any $r\in\mathbb N_0$,
an arbitrary linear second-order partial differential equation with nondegenerate symbol and more than two independent variables
possesses only a finite number of linearly independent linear symmetry operators up to order~$r$
and admits no ``nonlinear'' generalized symmetries.
Therein, they also described the algebra of generalized symmetries of the Laplace--Beltrami equation in the space~$\mathbb R^{p,q}$
in terms of the universal enveloping algebra of the essential Lie invariance algebra of this equation;
see~\cite{Eastwood2005} for further deeper study of the algebra of generalized symmetries of the Laplace equation.

Note that the algebra of generalized symmetries 
and the spaces of local conservation laws and variational symmetries of the associated Lagrangian
of the allied (1+1)-dimensional wave equation $u_{xy}=0$
are known, see~\cite[Section~18.4]{Ibragimov1985} and~\cite{PopovychCheviakov2020},
and they essentially differ from the corresponding objects for the equation~$\mathcal L$.
Nonlinear wave equations of the form \mbox{$u_{xy}=f(u)$} admitting generalized symmetries
whose characteristics do not depend on the independent variables
were singled out in~\cite{ZhiberShabat1979}; see also \cite[Section~21.2]{Ibragimov1985}.
The complete classification of local conservation laws of equations in this class was initiated
and partially carried out in~\cite{FoxGoertsches2011}.

The structure of the present paper is as follows.
In Section~\ref{KG:sec:GenSyms} we explicitly describe the quotient algebra~$\Sigma^{\rm q}$ of generalized symmetries
of the (1+1)-dimensional Klein--Gordon equation~$\mathcal L$ with respect to the standard equivalence of generalized symmetries
by presenting a naturally isomorphic space of representatives for equivalence classes of generalized symmetries.
This leads to the description of the algebra~$\Sigma^{\rm q}$ in terms of the universal enveloping algebra
of the essential Lie invariance algebra of~$\mathcal L$.
The related computations are essentially simplified by using advantages of the characteristic independent variables for the equation~$\mathcal L$,
which are specific for the (1+1)-dimensional case.
As another optimization, we avoid the direct integration of the system of determining equations for generalized symmetries of~$\mathcal L$.
Instead of this integration, which is realizable but quite cumbersome,
we estimate the number of independent linear symmetries of an arbitrary fixed order,
apply the Shapovalov--Shirokov theorem~\cite{ShapovalovShirokov1992}
and explicitly present the same number of appropriate linear symmetries.
In Section~\ref{KG:sec:VarSyms}
we recall the variational interpretation of the equation~$\mathcal L$ and
accurately single out the space of variational symmetries of the Lagrangian~$\mathrm L$
from the entire space of generalized symmetries of~$\mathcal L$.
Finally, in Section~\ref{KG:sec:ConsLaws} we find the space of local conservation laws of~$\mathcal L$ using Noether's theorem for
constructing a space of conserved currents that is naturally isomorphic to the space of local conservation laws.
In the course of this construction, we select conserved currents of minimal order among the equivalent ones,
which immediately specifies the spaces of conservation laws of each fixed order.
We also show that, up to the action of generalized symmetries, the entire space of conservation laws of the equation under
study is generated by a single conservation law.
In Section~\ref{KG:sec:Conclusion} we underscore all the techniques and ideas, especially specific
to the present paper, which we use in the course of the study.
\looseness=-1

\section{Generalized symmetries}\label{KG:sec:GenSyms}

Here we revisit the construction of the algebra~$\Sigma$ of generalized symmetries of the
(1+1)-dimensional Klein--Gordon equation with some enhancements.
Computing generalized symmetries,
without loss of generality we can consider only evolutionary generalized vector fields
and evolutionary representatives of generalized symmetries~\cite[p.~291]{Olver1993}
and thus assume that the algebra~$\Sigma$ is constituted by such representatives for the above equation,
\[
\Sigma=\big\{Q=\eta[u]\p_u\mid\mathrm D_x\mathrm D_y\eta[u]=\eta[u] \mbox{ on } \mathcal L\big\},
\]
where $\eta[u]$ denotes a differential function of~$u$,
and $\mathrm D_x$ and $\mathrm D_y$ are the operators of total derivatives in~$x$ and~$y$, respectively;
see~\cite[Definition~2.34]{Olver1993}.
We denote by $\Sigma^{\rm triv}$ the algebra of trivial generalized symmetries of the equation~$\mathcal L$,
which is an ideal of~$\Sigma$.
It consists of all generalized vector fields in the evolutionary form
(with the independent variables $(x,y)$ and the dependent variable~$u$)
whose characteristics vanish on solutions of~$\mathcal L$.
The quotient algebra $\Sigma^{\rm q}=\Sigma/\Sigma^{\rm triv}$ is naturally isomorphic%
\footnote{%
There are two similar kinds of \emph{natural ({\rm or} canonical) isomorphisms} in this paper---%
those related to quotient linear spaces and those related to quotient Lie algebras.
Given a linear space~$V$ and its subspaces~$U$ and~$W$ such that $V=U\dotplus W$,
where ``$\dotplus$'' denotes the direct sum of subspaces,
the natural isomorphism between $V/U$ and $W$ is established in the way that
each coset of~$U$ corresponds to the unique element of~$W$ belonging to this coset.
In a similar way, natural isomorphisms are established between $\mathfrak a/\mathfrak i$ and $\mathfrak b$,
where $\mathfrak a$ is a Lie algebra, and $\mathfrak b$ and~$\mathfrak i$ are its subalgebra and its ideal, respectively,
such that $\mathfrak a=\mathfrak b\lsemioplus\mathfrak i$.
}
to the algebra of canonical representatives in the reduced evolutionary form,\looseness=-1
\[
\hat\Sigma^{\rm q}=\big\{Q=\eta[u]\p_u\in\Sigma\mid\eta[u]=\eta(x,y,u_{-n},\dots,u_n) \mbox{ for some } n\in\mathbb N_0\big\}.
\]
Here $x$, $y$, $u_0:=u$, $u_k:=\p_x^ku$ and $u_{-k}:=\p_y^ku$, $k\in\mathbb N$,
constitute the standard coordinates on the manifold defined by the equation~$\mathcal L$
and its differential consequences in the infinite-order jet space $\mathrm J^\infty(x,y|u)$
with the independent variables $(x,y)$ and the dependent variable~$u$. 
The notation $u_{-k}$ is justified in view of the operator equality 
$\p_x|_{\mathcal L}^{}\circ\p_y|_{\mathcal L}^{}=\mathrm{id}|_{\mathcal L}^{}$,  
where $|_{\mathcal L}^{}$ denotes the restriction to the solution set of~$\mathcal L$.
The Lie bracket on~$\hat\Sigma^{\rm q}$ is defined
as the reduced Lie bracket of generalized vector fields, where all arising mixed derivatives of~$u$ are
substituted in view of the equation~$\mathcal L$ and its differential consequences,
\[
[\eta^1\p_u,\eta^2\p_u]
=\sum_{k=0}^{\infty}(\eta^2_{u_k}\mathscr D_x^k\eta^1-\eta^1_{u_k}\mathscr D_x^k\eta^2)\p_u
+\sum_{k=1}^{\infty}(\eta^2_{u_{-k}}\mathscr D_y^k\eta^1-\eta^1_{u_{-k}}\mathscr D_y^k\eta^2)\p_u,
\]
where $\mathscr D_x$ and~$\mathscr D_y$ are the reduced operators of total derivatives with respect to~$x$ and~$y$, respectively,
\[
\mathscr D_x:=\p_x+\sum_{k=-\infty}^{+\infty}u_{k+1}\p_{u_k},\quad
\mathscr D_y:=\p_y+\sum_{k=-\infty}^{+\infty}u_{k-1}\p_{u_k}.
\]

The subspace
\[
\Sigma^n=\big\{[Q]\in\Sigma^{\rm q}\mid\exists\,\eta[u]\p_u\in[Q]\colon\ord\eta[u]\leqslant n\big\}, \quad n\in\mathbb N_0\cup\{-\infty\},
\]
of~$\Sigma^{\rm q}$ is interpreted as the space of generalized symmetries of order less than or equal to~$n$.%
\footnote{
The order~$\ord F[u]$ of the differential function~$F[u]$ is the highest order of derivatives of~$u$ involved in~$F[u]$
if there are such derivatives, and $\ord F[u]=-\infty$ otherwise.
If $Q=\eta[u]\p_u$, then $\ord Q:=\ord\eta[u]$.
For $[Q]\in\Sigma^{\rm q}$, $\ord[Q]=\min\big\{\ord\eta[u]\mid\eta[u]\p_u\in[Q]\big\}$.
}
It is naturally isomorphic to the subspace of canonical representatives in the reduced evolutionary form
with characteristics of order less than or equal to~$n$,
\[
\hat\Sigma^n=\big\{\eta[u]\p_u\in\hat\Sigma^{\rm q}\mid\ord\eta[u]\leqslant n\big\},\quad n\in\mathbb N_0\cup\{-\infty\}.
\]
Note that the subspace~$\hat\Sigma^{-\infty}$ can be identified with the subalgebra of Lie symmetries of~$\mathcal L$
associated with the linear superposition of solutions of~$\mathcal L$,
\[\hat\Sigma^{-\infty}=\{f(x,y)\p_u\mid f\in\mathcal L\}.\] 
The notation $f\in\mathcal L$ means that the function~$f$ runs through the solution set of~$\mathcal L$.
The subspace family $\{\Sigma^n\mid n\in\mathbb N_0\cup\{-\infty\}\}$ filters the algebra~$\Sigma^{\rm q}$.
Consider the quotient spaces \mbox{$\Sigma^{[n]}=\Sigma^n/\Sigma^{n-1}$} for $n\in\mathbb N$ and
$\Sigma^{[0]}=\Sigma^0/\Sigma^{-\infty}$ and denote $\Sigma^{[-\infty]}:=\Sigma^{-\infty}$.
The space $\Sigma^{[n]}$ can be assumed as the space of $n$th order generalized symmetries of~$\mathcal L$, $n\in\mathbb N_0\cup\{-\infty\}$.

Since the equation~$\mathcal L$ is linear,
an important subalgebra of its generalized symmetries consists of linear generalized symmetries,
\[
\Lambda=\bigg\{\eta[u]\p_u\in\Sigma\ \Big|\ \eta=\mathscr Du
\mbox{ \ for some \ } \smash{\mathscr D=\sum_{|\alpha|\leqslant n}\zeta^\alpha(x,y)\mathrm D_x^{\alpha_1}\mathrm D_y^{\alpha_2}},\ n\in\mathbb N_0\bigg\}.
\]
Recall that $\alpha=(\alpha_1,\alpha_2)\in\mathbb N_0^{\,\,2}$ is a multiindex, and $|\alpha|=\alpha_1+\alpha_2$.
The subalgebra~$\Lambda^{\rm triv}$ of trivial linear generalized symmetries coincides with $\Lambda\cap\Sigma^{\rm triv}$.
The quotient algebra $\Lambda^{\rm q}=\Lambda/\Lambda^{\rm triv}$ can be embedded into~$\Sigma^{\rm q}$
as the subalgebra of cosets of~$\Sigma^{\rm triv}$ that contain linear generalized symmetries.
The subspace $\Lambda^n=\Lambda^{\rm q}\cap\Sigma^n$ with $n\in\mathbb N_0$
is naturally isomorphic to the space~$\hat\Lambda^n$ of evolutionary generalized symmetries whose characteristics are of the reduced form,
where the mixed derivatives of~$u$ are excluded in view of~$\mathcal L$,
\begin{gather}\label{eq:ReducedLinGenSymsOf(1+1)DKGEq}
\eta[u]=\sum_{k=-n}^n\eta^k(x,y)u_k.
\end{gather}
Elements of~$\hat\Lambda^n$ are canonical representatives of cosets of~$\Sigma^{\rm triv}$ constituting the space~$\Lambda^n$.
The quotient spaces $\Lambda^{[n]}=\Lambda^n/\Lambda^{n-1}$, $n\in\mathbb N$, and the subspace $\Lambda^{[0]}=\Lambda^0$
are naturally embedded into the respective spaces $\Sigma^{[n]}$'s, $n\in\mathbb N_0$.
We interpret the space $\Lambda^{[n]}$ as the space of $n$th order linear generalized symmetries of~$\mathcal L$, $n\in\mathbb N_0$.
This space is isomorphic to the space of the pairs $(\eta^n,\eta^{-n})$
such that the differential function~$\eta[u]$ defined by~\eqref{eq:ReducedLinGenSymsOf(1+1)DKGEq}
with some values of the other coefficients~$\eta$'s
is the characteristic of an element of~$\hat\Lambda^n$.

\begin{lemma}\label{lemma:IDFM:DimLambda}
$\dim\Lambda^{[n]}=2n+1$, $n\in\mathbb N_0$.
\end{lemma}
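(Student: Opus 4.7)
The plan is to reduce the lemma to a finite-dimensional counting problem on the determining system. Working with a representative in reduced evolutionary form $\eta=\sum_{k=-n}^{n}\eta^k(x,y)u_k$, I substitute into $\mathscr D_x\mathscr D_y\eta=\eta$, use $\mathscr D_xu_k=u_{k+1}$ and $\mathscr D_yu_k=u_{k-1}$, and equate coefficients of each jet coordinate $u_j$. This produces the system
\[
\eta^j_{xy}+\eta^{j-1}_y+\eta^{j+1}_x=0,\qquad j\in\mathbb Z,
\]
with $\eta^k=0$ for $|k|>n$. The equations at $j=\pm(n+1)$ immediately force $\eta^n=\eta^n(x)$ and $\eta^{-n}=\eta^{-n}(y)$, so the map $[Q]\mapsto(\eta^n,\eta^{-n})$ sends $\Lambda^{[n]}$ into $\mathbb R[x]\times\mathbb R[y]$ and is injective since the remaining coefficients have strictly lower order.

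For the lower bound, I construct $2n+1$ linearly independent representatives as images of well-chosen operators from the universal enveloping algebra of $\mathfrak g^{\rm ess}$. In evolutionary form the generators correspond to $\mathrm{id}$, $\mathscr D_x$, $\mathscr D_y$ and $M:=x\mathscr D_x-y\mathscr D_y$, with $\mathscr D_x\mathscr D_y=\mathrm{id}$ on~$\mathcal L$. The $2n+1$ operators $M^a\mathscr D_x^b$ with $a\geqslant 0$, $b\in\mathbb Z$, $a+|b|=n$ (using $\mathscr D_x^{-1}=\mathscr D_y$ on $\mathcal L$) have order exactly~$n$. Evaluating each on the test solution $u=e^{\lambda x+y/\lambda}$, for which $\mathscr D_xu=\lambda u$, $\mathscr D_yu=\lambda^{-1}u$ and $Mu=(x\lambda-y/\lambda)u$, produces multiplication by an explicit Laurent polynomial in $\lambda$ of degrees bounded by~$n$; extracting the coefficients of $\lambda^{\pm n}$ reads off the leading pairs as $(x^a,0)$ for $0\leqslant a\leqslant n-1$, $(x^n,(-y)^n)$ for $M^n$, and $(0,(-y)^a)$ for $M^a\mathscr D_y^{n-a}$ with $0\leqslant a\leqslant n-1$. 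These $2n+1$ pairs are visibly linearly independent in $\mathbb R[x]\times\mathbb R[y]$.

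For the matching upper bound, I propagate the determining system inward from its two boundaries. Induction on $k$ using the equation at $j=n-k+1$ shows $\eta^{n-k}$ is a polynomial in $y$ of degree at most $\lfloor k/2\rfloor$ with coefficients in $x$; at $k=2n$ this yields $\deg_y\eta^{-n}\leqslant n$, and symmetrically $\deg_x\eta^n\leqslant n$. Tracking the top $y$-coefficient through the successive integrations identifies it as a constant multiple of the top $x$-coefficient of $\eta^n$, giving the single compatibility relation $[y^n]\eta^{-n}=(-1)^n[x^n]\eta^n$. Hence the image of $\Lambda^{[n]}$ has dimension at most $(n+1)+(n+1)-1=2n+1$, matching the lower bound. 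The main obstacle is the upper-bound recursion: carrying it through all $2n+3$ boundary equations cleanly and isolating the single nontrivial compatibility among the many intermediate integrability conditions that arise along the way; the explicit basis from the lower bound serves as the natural target and sanity check.
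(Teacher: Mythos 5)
Your argument is correct, but it reaches the count $2n+1$ by a genuinely different route in both halves. For the upper bound, the paper integrates the determining system in one direction only: it sets $\eta^{-n}=\theta(y)$, integrates the equations $\Delta_k$ successively in $x$ to get $\eta^n=\frac{(-1)^n}{n!}\theta^{(n)}x^n+R$ with $\deg_xR<n$, and then the single remaining equation $\Delta_n$ collapses to the ODE $\theta^{(n+1)}=0$; this yields the count $n+1$ (plus $n+1$ from the $x\leftrightarrow y$ symmetry, minus $1$ for the overlap at $\theta=y^n$) with existence built into the integration, so no separate compatibility analysis is needed. Your two-sided propagation, with the degree bound $\deg_y\eta^{n-k}\leqslant\lfloor k/2\rfloor$ and the single relation $[y^n]\eta^{-n}=(-1)^n[x^n]\eta^n$, does check out --- the top-coefficient recursion $m\,[y^m]\eta^{n-2m}=-\partial_x[y^{m-1}]\eta^{n-2m+2}$ closes because the terms $\eta^{n-2m+1}_{xy}$ cannot reach the top $y$-degree, and the sign agrees with the paper's formula --- but it only gives an inequality, so you are forced to supply a matching lower bound, which is your second departure: you explicitly construct $2n+1$ operators $M^a\mathscr D_x^b$, $a+|b|=n$, and detect their leading pairs by evaluation on $u=e^{\lambda x+y/\lambda}$. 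That is precisely the mechanism of the paper's Lemma~\ref{lemma:IDFM:SigmaIsomorphism} (there with $\mathrm J^k\mathrm D_x^{n-k}u$, $\mathrm J^k\mathrm D_y^{n-k}u$ and the solution $u^\lambda$), so you have in effect folded the nontriviality argument of Lemma~\ref{lemma:IDFM:SigmaIsomorphism} into the proof of Lemma~\ref{lemma:IDFM:DimLambda}. What your version buys is an explicit basis of $\Lambda^{[n]}$ already at this stage and freedom from carrying out the full integration of the determining system; what the paper's version buys is a self-contained dimension count from the determining equations alone, after which Lemma~\ref{lemma:IDFM:SigmaIsomorphism} follows quickly from the Shapovalov--Shirokov theorem. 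The one point you should still write out carefully is the inward recursion itself (your own flagged obstacle): as stated it is a sketch, though the induction it requires is routine and sound.
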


\begin{proof}
For linear generalized symmetries with characteristics of the form~\eqref{eq:ReducedLinGenSymsOf(1+1)DKGEq},
the invariance criterion for the equation~$\mathcal L$, $\mathscr D_x\mathscr D_y\eta=\eta$,
implies the following system of determining equations:
\begin{gather}\label{eq:DetEqsForReducedLinGenSymsOf(1+1)DKGEq}
\Delta_k\colon\ \eta^k_{xy}+\eta^{k-1}_y+\eta^{k+1}_x=0,\quad k=-n-1,-n,\dots,n,n+1,
\end{gather}
where we assume $\eta^{-n-2}$, $\eta^{-n-1}$, $\eta^{n+1}$ and $\eta^{n+2}$ to vanish.
These symmetries are of (essential) order~$n$ if and only if
at least one of the coefficients~$\eta^{-n}$ and $\eta^n$ does not vanish.

Suppose that the coefficient~$\eta^{-n}$ does not vanish.
We integrate the equation~$\Delta_{-n-1}$: $\eta^{-n}_x=0$, which gives $\eta^{-n}=\theta(y)$ for some smooth function~$\theta$ of~$y$.
After substituting the obtained value of~$\eta^{-n}$ into $\Delta_{-n}$ and $\Delta_{-n+1}$,
we consider the set~$\Delta_{[-n,n-1]}$ of the equations~$\Delta_k$ with $k=-n,-n+1,\dots,n-1$
as a system of inhomogeneous linear differential equations with respect to the other $\eta$'s.
The equation $\Delta_{-n}$ takes the form $\eta^{-n+1}_x=0$,
and it is convenient to represent the equations~$\Delta_k$ with $k=-n+1,-n+2,\dots,n-1$ as
$\eta^{k+1}_x=-\eta^k_{xy}-\eta^{k-1}_y$.
To find a particular solution of the system~$\Delta_{[-n,n-1]}$,
we successively integrate its equations with respect to~$x$,
taking the antiderivatives $0$ and $x^{n+1}/(n+1)$ for $0$ and~$x^n$, respectively.
We can neglect the solutions of the homogeneous counterpart of~$\Delta_{[-n,n-1]}$
since they correspond to the zero value of~$\eta^{-n}$.
After the integration, we derive an expression for $\eta^n$ of the form
\[
\eta^n=\frac{(-1)^n}{n!}\frac{{\rm d}^n\theta}{{\rm d}y^n}x^n+R,
\]
where $R$ is a polynomial in~$x$ with $\deg_xR<n$,
whose coefficients depend linearly and homogeneously on derivatives of~$\theta$ of order greater than~$n$.
Substituting this expression into the equation~$\Delta_n$: $\eta^n_y=0$
and splitting with respect to~$x$,
we obtain the equation ${\rm d}^{n+1}\theta/{\rm d}y^{n+1}=0$.
Since the derivative~$\eta^{n-1}_y$ is of the same structure as~$R$,
the equation~$\Delta_{n-1}$: $\eta^{n-1}_y=0$ is identically satisfied in view of the equation for~$\theta$.
As a result, we have $n+1$ linearly independent values of the coefficient~$\eta^{-n}$, say, $1$,~$y$,~\dots,~$y^n$,
and, therefore, $n+1$ linearly independent generalized symmetries with characteristics
of the form~\eqref{eq:ReducedLinGenSymsOf(1+1)DKGEq} with nonvanishing coefficient~$\eta^{-n}$.
Moreover, only one of these symmetries, with $\eta^{-n}=y^n$, has a nonvanishing value of the coefficient~$\eta^n$.

Since the problem is symmetric with respect to~$x$ and~$y$,
after supposing that the coefficient~$\eta^n$ does not vanish,
we turn the above procedure around by permuting~$x$ and~$y$ and by changing the direction of the successive integration.
This leads to $n+1$ linearly independent generalized symmetries with characteristics
of the form~\eqref{eq:ReducedLinGenSymsOf(1+1)DKGEq} with nonvanishing coefficient~$\eta^n$,
where similarly to the above case, only one of these symmetries has a nonvanishing value of the coefficient~$\eta^{-n}$.

Therefore, in total there exist precisely $2n+1$ linearly independent $n$th order generalized symmetries
with characteristics of the form~\eqref{eq:ReducedLinGenSymsOf(1+1)DKGEq}.
\end{proof}

\begin{corollary}
$\dim\Lambda^n=\sum\limits_{k=0}^n\dim\Lambda^{[k]}=(n+1)^2<+\infty$, $n\in\mathbb N_0$.
\end{corollary}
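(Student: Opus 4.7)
The plan is to read off the dimension of $\Lambda^n$ directly from the filtration of $\Lambda^{\mathrm q}$ by the subspaces $\Lambda^0\subseteq\Lambda^1\subseteq\cdots\subseteq\Lambda^n$, using Lemma~\ref{lemma:IDFM:DimLambda} as the sole nontrivial input. First I would note that, by definition, $\Lambda^{[k]}=\Lambda^k/\Lambda^{k-1}$ for $k\geqslant 1$ and $\Lambda^{[0]}=\Lambda^0$, so for each $k\geqslant 1$ there is a short exact sequence
\[
0\longrightarrow\Lambda^{k-1}\longrightarrow\Lambda^k\longrightarrow\Lambda^{[k]}\longrightarrow 0
\]
of linear spaces. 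Since the preceding lemma asserts that $\Lambda^{[k]}$ is finite-dimensional with $\dim\Lambda^{[k]}=2k+1$, a straightforward induction on $n$ shows that every $\Lambda^n$ is finite-dimensional and that the exact sequence above gives the additive relation $\dim\Lambda^n=\dim\Lambda^{n-1}+\dim\Lambda^{[n]}$. Combining this with the base case $\dim\Lambda^0=\dim\Lambda^{[0]}$ and telescoping yields $\dim\Lambda^n=\sum_{k=0}^n\dim\Lambda^{[k]}$.

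Second, I would substitute the value provided by the lemma and evaluate the resulting arithmetic sum: $\sum_{k=0}^n(2k+1)=(n+1)^2$, the familiar identity for the sum of the first $n+1$ positive odd integers. There is no genuine obstacle in this argument; the corollary is essentially a bookkeeping consequence of the lemma. The only point requiring even a moment's care is the compatibility between the definitions of $\Lambda^{[0]}$ (as $\Lambda^0$ itself, not as a quotient by some $\Lambda^{-\infty}$) and $\Lambda^{[k]}$ for $k\geqslant 1$, which is precisely what makes the telescoping produce a clean sum starting at $k=0$.
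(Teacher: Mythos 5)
Your argument is correct and is exactly the bookkeeping the paper intends: the corollary is stated without proof as an immediate consequence of Lemma~\ref{lemma:IDFM:DimLambda}, obtained by telescoping the filtration $\Lambda^0\subseteq\cdots\subseteq\Lambda^n$ and summing $\sum_{k=0}^n(2k+1)=(n+1)^2$. Your remark on the compatibility of $\Lambda^{[0]}=\Lambda^0$ with the quotient definition for $k\geqslant1$ is the right (and only) point of care.
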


\begin{lemma}\label{lemma:IDFM:SigmaIsomorphism}
The space~$\Sigma^{[n]}$ with $n\in\mathbb N_0$ is naturally isomorphic to the subspace
\[
\tilde{\Sigma}^{[n]}=\big\langle(\mathrm J^nu)\p_u,\,(\mathrm J^k\mathrm D_x^{n-k}u)\p_u,\,(\mathrm J^k\mathrm D_y^{n-k}u)\p_u,\,k=0,\dots,n-1\big\rangle
\]
of~$\Lambda$, where $\mathrm J:=x\mathrm D_x-y\mathrm D_y$.
Here each element~$Q$ of~$\tilde{\Sigma}^{[n]}$ corresponds to the element of~$\Sigma^{[n]}$
that, as a coset of~$\Sigma^{n-1}$ in~$\Sigma^n$, contains an element of $\Sigma^n$
that, as a coset of~$\Sigma^{\rm triv}$ in~$\Sigma$, contains~$Q$.
\end{lemma}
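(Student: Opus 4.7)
The plan is to realize $\Sigma^{[n]}$ via $\tilde\Sigma^{[n]}$ by a dimension count built on the previous lemma, supplemented by a structural bound on nonlinear symmetries.

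First, I would verify that each generator of $\tilde\Sigma^{[n]}$ is a linear generalized symmetry of $\mathcal L$ of order at most $n$. The operators $\mathrm D_x$, $\mathrm D_y$ and $\mathrm J=x\mathrm D_x-y\mathrm D_y$ all commute with $\mathrm D_x\mathrm D_y$ on the infinite jet space---the first two trivially, and the third because a short computation yields $[x\mathrm D_x,\mathrm D_x\mathrm D_y]=-\mathrm D_x\mathrm D_y=[y\mathrm D_y,\mathrm D_x\mathrm D_y]$, so the two contributions in $\mathrm J$ cancel. Consequently, for any polynomial $\mathscr D$ in these three operators, $\mathrm D_x\mathrm D_y(\mathscr D u)=\mathscr D(\mathrm D_x\mathrm D_y u)\equiv\mathscr D u$ on $\mathcal L$, so each generator is an element of $\Lambda^n$.

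Second, I would compute the leading coefficients of $u_n$ and $u_{-n}$ in each generator by induction on the power of~$\mathrm J$. The outcome is the assignment
\[
\mathrm J^k\mathrm D_x^{n-k}u\ \longmapsto\ (x^k,0),\quad \mathrm J^k\mathrm D_y^{n-k}u\ \longmapsto\ (0,(-y)^k)\quad(k<n),\quad \mathrm J^n u\ \longmapsto\ (x^n,(-y)^n),
\]
from $\tilde\Sigma^{[n]}$ to pairs $(\eta^n,\eta^{-n})\in\mathbb R[x]\oplus\mathbb R[y]$. These $2n+1$ pairs are manifestly linearly independent.

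Third, I would invoke the identification from Lemma~\ref{lemma:IDFM:DimLambda} of $\Lambda^{[n]}$ with the $(2n{+}1)$-dimensional space of admissible leading pairs $(\eta^n,\eta^{-n})$. Since our generators produce $2n+1$ linearly independent such pairs, they span $\Lambda^{[n]}$, and the natural map $\tilde\Sigma^{[n]}\to\Lambda^{[n]}$ is an isomorphism; injectivity follows from the same leading-coefficient computation.

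The main obstacle is the remaining step, namely $\Lambda^{[n]}=\Sigma^{[n]}$: every coset in $\Sigma^n/\Sigma^{n-1}$ must contain a \emph{linear} representative. I would invoke the Shapovalov--Shirokov theorem recalled in the introduction, which supplies an upper bound on $\dim\Sigma^{[n]}$ that matches the $2n+1$ we have already realized and thus forces equality. An alternative, more hands-on route is to attack the determining equation $\mathscr D_x\mathscr D_y\eta=\eta$ directly by splitting in the highest jet variables $u_{\pm(n+1)}$ and peeling off any genuinely nonlinear dependence on the $u_i$ layer by layer, mimicking the immediate splittings that reduce the cases $n=0,1$ to linearity in $u$. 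The introduction notes that this direct route is ``realizable but quite cumbersome,'' which is why the appeal to Shapovalov--Shirokov is the preferred finishing move.
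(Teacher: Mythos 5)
Your proposal is correct and follows essentially the same route as the paper: realize the $2n+1$ generators as linear symmetries built from the recursion operators $\mathrm D_x$, $\mathrm D_y$, $\mathrm J$, show they represent distinct cosets of exact order~$n$, and close the argument by combining the Shapovalov--Shirokov theorem with the dimension count of Lemma~\ref{lemma:IDFM:DimLambda}. The only (harmless) variation is in the non-triviality/independence step: you compute the leading coefficients $(\eta^n,\eta^{-n})$ of the reduced forms, whereas the paper evaluates the characteristics on the solution family $u^\lambda=e^{\lambda x+\lambda^{-1}y}$ and inspects the top-degree part of the resulting polynomial.
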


\begin{proof}
In view of the Shapovalov--Shirokov theorem~\cite[Theorem~4.1]{ShapovalovShirokov1992}, Lemma~\ref{lemma:IDFM:DimLambda}
implies that $\Sigma^{[n]}=\Lambda^{[n]}$ for $n\in\mathbb N_0$.

The differential functions $\mathrm D_xu=u_x$, $\mathrm D_yu=u_y$ and~$\mathrm Ju=xu_x-yu_y$ are the characteristics
of the Lie symmetries~$-\p_x$, $-\p_y$ and~$y\p_y-x\p_x$ of the equation~$\mathcal L$, respectively,
and hence the operators~$\mathrm D_x$, $\mathrm D_y$ and~$\mathrm J$ are recursion operators of~$\mathcal L$.
Therefore, any operator~$\mathfrak D$ in the universal enveloping algebra generated by these operators
is a symmetry operator of~$\mathcal L$,
that is, a generalized vector field~$(\mathfrak Du)\p_u$ is a generalized symmetry of~$\mathcal L$.
Thus, $\tilde\Sigma^{[n]}\subset\Lambda\subset\Sigma$.

The space~$\tilde{\Sigma}^{[n]}$ contains no nonzero trivial generalized symmetries of~$\mathcal L$.
Indeed, suppose that an element~$Q\in\tilde{\Sigma}^{[n]}$ with characteristic
\[
Q[u]=a\mathrm J^nu+\sum_{k=0}^{n-1}\big(b_k\mathrm J^k\mathrm D_x^{n-k}u+c_k\mathrm J^k\mathrm D_y^{n-k}u\big)
\]
is a trivial symmetry, that is, $Q[u]$ vanishes on solutions of~$\mathcal L$. Here~$a$, $b$'s and~$c$'s are constants.
Consider the solution~$u^\lambda={\rm e}^{\lambda x+\lambda^{-1}y}$ of the equation~$\mathcal L$, which is parameterized
by~$\lambda\in\mathbb R/\{0\}$.
The expression ${\rm e}^{-\lambda x-\lambda^{-1}y}Q[u^\lambda]$ is a polynomial in $\lambda x-\lambda^{-1}y$, $\lambda x+\lambda^{-1}y$, $\lambda$ and  $\lambda^{-1}$,
whose collection of terms of maximal total degree, which equals~$n$, coincides with
\[
a\big(\lambda x-\lambda^{-1}y\big)^n
+\sum_{k=0}^{n-1}\big(\lambda x-\lambda^{-1}y\big)^k\big(b_k\lambda^{n-k}+c_k\lambda^{k-n}\big).
\]
Then the condition $Q[u^\lambda]=0$ implies that $a=0$ and $b_k=c_k=0$, $k=0,\dots,n-1$.

In other words, different elements of~$\tilde{\Sigma}^{[n]}$ belong to different cosets of~$\Sigma^{\rm triv}$ in~$\Sigma$,
which are elements of~$\Sigma^{\rm q}$.
Moreover, the order of each of these cosets is equal to~$n$, and $\dim\tilde{\Sigma}^{[n]}=2n+1$.
Then Lemma~\ref{lemma:IDFM:DimLambda} implies
that the space~$\tilde{\Sigma}^{[n]}$ is canonically isomorphic to the space~$\Lambda^{[n]}=\Sigma^{[n]}$.
\end{proof}

It follows from Lemma~\ref{lemma:IDFM:SigmaIsomorphism} that
$\Sigma^{\rm q}=\Lambda^{\rm q}\lsemioplus\Sigma^{-\infty}\simeq\tilde\Sigma^{\rm q}=\tilde\Lambda^{\rm q}\lsemioplus\tilde\Sigma^{-\infty}$, where
\begin{gather*}
\Lambda^{\rm q}\simeq\tilde\Lambda^{\rm q}
:=\big\langle(\mathrm J^ku)\p_u,\,(\mathrm J^k\mathrm D_x^lu)\p_u,\,(\mathrm J^k\mathrm D_y^lu)\p_u,\,k\in\mathbb N_0,\,l\in\mathbb N\big\rangle,
\\
\Sigma^{-\infty}\simeq\tilde\Sigma^{-\infty}:=\hat\Sigma^{-\infty}=\big\{f(x,y)\p_u\mid f\in\mathcal L\big\},
\end{gather*}
and all the above isomorphisms are natural as related to quotient spaces.
They become natural isomorphisms related to quotient Lie algebras
if we define the Lie bracket on the space~$\tilde\Sigma^{\rm q}$
as the Lie bracket of generalized vector fields,
where mixed derivatives arising due to the action of the operators~$\mathrm D_x$ and~$\mathrm D_y$ not involved in~$\mathrm J$
should be substituted in view of the equation~$\mathcal L$ and its differential consequences.

The essential Lie invariance algebra~$\mathfrak g^{\rm ess}$ of the equation~$\mathcal L$
is spanned by the vector fields~$\p_x$, $\p_y$, $x\p_x-y\p_y$ and $u\p_u$, cf.~\cite{FushchichNikitin1994}.
It can be identified with the quotient~$\mathfrak g/\tilde\Sigma^{-\infty}$ of the Lie invariance algebra~$\mathfrak g$ of~$\mathcal L$
with respect to the abelian ideal~$\tilde\Sigma^{-\infty}$
corresponding to the linear superposition of solutions of~$\mathcal L$.
Thus, the algebra~$\mathfrak g^{\rm ess}$ is isomorphic to the direct sum of the pseudo-Euclidean algebra~$\mathfrak e(1,1)$
(the Poincar\'e algebra~$\mathfrak p(1,1)$ in another terminology
or the algebra~${\rm g}^{-1}_{3.4}$ in Mubarakzyanov's classification of low-dimensional Lie algebras~\cite{Mubarakzyanov1963a})
and the one-dimensional (abelian) algebra~$\mathfrak a_1$,
$\mathfrak g^{\rm ess}\simeq \mathfrak e(1,1)\oplus \mathfrak a_1$.
Note also that~$\mathfrak g^{\rm ess}\simeq\Lambda^1\simeq\Sigma^1/\Sigma^{-\infty}$.
Let~$\phi\colon\mathfrak g^{\rm ess}\to\mathfrak e(1,1)\oplus \mathfrak a_1$
be the isomorphism with $\phi(u\p_u)=e_0$, $\phi(\p_x)=e_1$, $\phi(\p_y)=e_2$ and~$\phi(x\p_x-y\p_y)=e_3$,
where $\langle e_0\rangle=\mathfrak a_1$ and the basis $(e_1,e_2,e_3)$ of~$\mathfrak e(1,1)$
is related to the standard basis $(\tilde e_1,\tilde e_2,\tilde e_3)$ by $\tilde e_1=e_1+e_2$, $\tilde e_2=e_1-e_2$, $\tilde e_3=e_3$.
The canonical commutation relations of $\mathfrak e(1,1)$ are
$[\tilde e_1,\tilde e_2]=0$, $[\tilde e_1,\tilde e_3]=\tilde e_2$ and $[\tilde e_2,\tilde e_3]=\tilde e_1$,
which take, in the basis $(e_1,e_2,e_3)$, the form $[e_1,e_2]=0$, $[e_1,e_3]=e_1$ and $[e_2,e_3]=-e_2$.
Thus, the universal enveloping algebra~$\mathfrak U(\mathfrak g^{\rm ess})$ of the algebra~$\mathfrak g^{\rm ess}$ is isomorphic to the quotient
of the tensor algebra~$\mathrm T(\mathfrak e(1,1)\oplus \mathfrak a_1)$ by the two-sided ideal~$I$ generated by
$e_1\otimes e_2-e_2\otimes e_1$, $e_1\otimes e_3-e_3\otimes e_1-e_1$, $e_2\otimes e_3-e_3\otimes e_2+e_2$,
$e_0\otimes e_i-e_i\otimes e_0$, $i=1,2,3$.

\begin{theorem}
The quotient algebra~$\Sigma^{\rm q}$ of generalized symmetries of the Klein--Gordon equation~$\mathcal L$
is naturally isomorphic to the algebra~$\tilde\Sigma^{\rm q}$, which is the semidirect sum of the algebra
\[
\tilde\Lambda^{\rm q}=\big\langle(\mathrm J^ku)\p_u,\,(\mathrm J^k\mathrm D_x^lu)\p_u,\,(\mathrm J^k\mathrm D_y^lu)\p_u,\,k\in\mathbb N_0,\,l\in\mathbb N\big\rangle
\simeq\mathfrak U\big(\mathfrak e(1,1)\oplus \mathfrak a_1\big)/\mathfrak I
\]
with the abelian algebra $\tilde\Sigma^{-\infty}=\{f(x,y)\p_u\mid f\in\mathcal L\}$.
Here~$\mathfrak I$ is the two-sided ideal of the universal enveloping algebra $\mathfrak U(\mathfrak e(1,1)\oplus\mathfrak a_1)$
that is generated by the cosets $e_1\otimes e_2-e_0+I$ and $e_0\otimes e_j-e_j+I$, $j=0,1,2,3$.
\end{theorem}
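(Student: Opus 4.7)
The semidirect-sum decomposition $\tilde\Sigma^{\rm q}=\tilde\Lambda^{\rm q}\lsemioplus\tilde\Sigma^{-\infty}$ and its natural isomorphism with $\Sigma^{\rm q}$ follow directly from Lemma~\ref{lemma:IDFM:SigmaIsomorphism} together with the discussion preceding the theorem, so the only nontrivial task is to identify $\tilde\Lambda^{\rm q}$ with the quotient $\mathfrak U(\mathfrak e(1,1)\oplus\mathfrak a_1)/\mathfrak I$ of the universal enveloping algebra.

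The plan is to realize $\tilde\Lambda^{\rm q}$ as the image of an explicit representation of $\mathfrak U(\mathfrak e(1,1)\oplus\mathfrak a_1)$ by recursion operators and then identify its kernel. First, I would define $\psi$ on the basis of $\mathfrak e(1,1)\oplus\mathfrak a_1$ by $\psi(e_0):=\mathrm{Id}$, $\psi(e_1):=\mathrm D_x$, $\psi(e_2):=\mathrm D_y$, $\psi(e_3):=\mathrm J$. The identities $[\mathrm D_x,\mathrm D_y]=0$, $[\mathrm D_x,\mathrm J]=\mathrm D_x$, $[\mathrm D_y,\mathrm J]=-\mathrm D_y$ and the centrality of $\mathrm{Id}$ reproduce the structure constants of $\mathfrak e(1,1)\oplus\mathfrak a_1$, so by the universal property $\psi$ extends to an associative-algebra homomorphism from $\mathfrak U(\mathfrak e(1,1)\oplus\mathfrak a_1)$ into the algebra of linear differential operators on $\mathrm J^\infty(x,y|u)$. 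Composing with the map $\mathfrak D\mapsto(\mathfrak D u)\p_u$ and passing to cosets modulo trivial generalized symmetries produces a homomorphism $\bar\psi$ into $\tilde\Lambda^{\rm q}$.

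Next I would check that every generator of $\mathfrak I$ is sent to zero. The defining equation $\mathcal L$ reads $\mathrm D_x\mathrm D_y u=u$, hence $\bar\psi(e_1\otimes e_2-e_0)=[(\mathrm D_x\mathrm D_y u-u)\p_u]=0$ in $\tilde\Lambda^{\rm q}$, while for each $j\in\{0,1,2,3\}$ the operator $\mathrm{Id}\cdot\psi(e_j)-\psi(e_j)$ is identically zero already on the jet space, giving $\bar\psi(e_0\otimes e_j-e_j)=0$. Thus $\bar\psi$ factors through $\mathfrak U/\mathfrak I$, and the resulting factored map is surjective, since Lemma~\ref{lemma:IDFM:SigmaIsomorphism} exhibits the spanning set of $\tilde\Lambda^{\rm q}$ as images of the monomials $e_3^n$, $e_3^k e_1^{n-k}$ and $e_3^k e_2^{n-k}$.

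The principal obstacle is injectivity. I would handle it by a filtration-wise dimension count. Inside $\mathfrak U/\mathfrak I$ the relation $e_0\equiv 1$ allows one to delete every factor of $e_0$, while $e_1\otimes e_2\equiv 1$ cancels adjacent pairs $e_1\otimes e_2$; combined with the commutation relations $[e_1,e_3]=e_1$, $[e_2,e_3]=-e_2$ and the PBW theorem for $\mathfrak U(\mathfrak e(1,1))$, this reduces every element of filtration degree at most $n$ to a linear combination of the canonical monomials $e_3^c$, $e_1^j e_3^c$, $e_2^j e_3^c$ with $j\geqslant 1$ and $j+c\leqslant n$. A direct count gives at most $(n+1)^2$ such monomials, matching $\dim\Lambda^n=(n+1)^2$ from the corollary following Lemma~\ref{lemma:IDFM:DimLambda}. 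As $\bar\psi$ is surjective between filtered spaces of the same finite dimension in each filtration piece, it is bijective on each piece, hence an isomorphism of algebras, which completes the identification and the proof of the theorem.
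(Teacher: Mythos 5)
Your proposal follows the same route as the paper, which states this theorem without a separate proof and derives it from Lemma~\ref{lemma:IDFM:SigmaIsomorphism} (Shapovalov--Shirokov plus the dimension count of Lemma~\ref{lemma:IDFM:DimLambda}) together with the discussion of $\mathfrak U(\mathfrak e(1,1)\oplus\mathfrak a_1)$ that immediately precedes the statement. The homomorphism $\psi$, the verification that the generators of $\mathfrak I$ are annihilated, the surjectivity via the monomials $e_3^n$, $e_3^ke_1^{n-k}$, $e_3^ke_2^{n-k}$, and the filtration-wise comparison with $\dim\Lambda^n=(n+1)^2$ are precisely the details the paper leaves implicit, and they are correct in substance. (A cosmetic remark: the passage $\mathfrak D\mapsto(\mathfrak Du)\p_u$ reverses brackets, so $\bar\psi$ is an anti-homomorphism with respect to the Lie bracket of evolutionary vector fields; this does not affect the identification of the underlying associative operator algebra, which is what the theorem asserts.)

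The one step that does not go through exactly as written is the reduction ``$e_0\equiv1$''. The ideal $\mathfrak I$ is generated by $e_1\otimes e_2-e_0+I$ and $e_0\otimes e_j-e_j+I$, $j=0,\dots,3$; these relations force $e_0w=we_0=w$ for every \emph{nonempty} word $w$ (using the centrality of $e_0$), but they do not place $e_0-1$ itself in $\mathfrak I$. Indeed, every generator of $\mathfrak I$ lies in the augmentation ideal, so the counit of $\mathfrak U(\mathfrak e(1,1)\oplus\mathfrak a_1)$ descends to $\mathfrak U/\mathfrak I$ and shows that $1-e_0$ is a nonzero central idempotent spanning a one-dimensional direct factor. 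Consequently your canonical-monomial count for the filtration piece of degree at most $n$ is $(n+1)^2+1$ rather than $(n+1)^2$, and $\bar\psi$ has the one-dimensional kernel spanned by $1-e_0$ instead of being injective. This is as much an imprecision of the theorem's formulation as of your argument: one should either add $e_0-1+I$ to the generators of $\mathfrak I$, or replace $\mathfrak U/\mathfrak I$ by its direct factor $e_0\cdot\bigl(\mathfrak U(\mathfrak e(1,1)\oplus\mathfrak a_1)/\mathfrak I\bigr)$, which is unital with unit $e_0$. With that one-line correction your proof is complete.
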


Moreover, for each $Q\in\tilde\Lambda^{\rm q}$ we denote by~$\mathfrak Q$ the linear operator in total derivatives with
coefficients depending on~$x$ and~$y$ that is associated with~$Q$, $Q[u]=\mathfrak Qu$. In this terminology
the operators~$1$, $\mathrm D_x$, $\mathrm D_y$ and~$\mathrm J$ are associated with the evolutionary forms of the
Lie symmetries~$u\p_u$, $-\p_x$, $-\p_y$ and~$y\p_y-x\p_x$ of the Klein--Gordon equation~$\mathcal L$, respectively.
Note that $\mathfrak{LQ=QL}$ for any $Q\in\tilde\Lambda^{\rm q}$.

\noprint{
\begin{corollary}
$\hat\Sigma^{\rm q}=\hat\Lambda^{\rm q}\lsemioplus\hat\Sigma^{-\infty}$, where
$\hat\Lambda^{\rm q}:=\langle\,(\mathscr J^ku)\p_u,\,(\mathscr J^k\mathscr D_x^lu)\p_u,\,(\mathscr J^k\mathscr D_y^lu)\p_u,
\,k\in\mathbb N_0,\,l\in\mathbb N\rangle\simeq\tilde\Lambda^{\rm q}$,
$\hat\Sigma^{-\infty}=\tilde\Sigma^{-\infty}:=\{f(x,y)\p_u\mid f\in\mathcal L\}$,
and $\mathscr J:=x\mathscr D_x-y\mathscr D_y$.
\end{corollary}
}

\begin{corollary}
$\hat\Sigma^{\rm q}=\hat\Lambda^{\rm q}\lsemioplus\hat\Sigma^{-\infty}$, where
\begin{gather*}
\hat\Lambda^{\rm q}:=\big\langle(\mathscr J^ku)\p_u,\,(\mathscr J^k\mathscr D_x^lu)\p_u,\,(\mathscr J^k\mathscr D_y^lu)\p_u,
\,k\in\mathbb N_0,\,l\in\mathbb N\big\rangle\simeq\tilde\Lambda^{\rm q},
\\
\hat\Sigma^{-\infty}=\tilde\Sigma^{-\infty}:=\big\{f(x,y)\p_u\mid f\in\mathcal L\big\},
\end{gather*}
and $\mathscr J:=x\mathscr D_x-y\mathscr D_y$.
\end{corollary}

\section{Variational symmetries}\label{KG:sec:VarSyms}

The (1+1)-dimensional Klein--Gordon equation~$\mathcal L$ is
the Euler--Lagrange equation for the Lagrangian $\mathrm L=-(u_xu_y+u^2)/2$.
Therefore, the spaces $\Sigma$, $\Sigma^{\rm triv}$ and $\Sigma^{\rm q}$
respectively coincide with their counterparts for cosymmetries.
Moreover, in view of Noether's theorem~\cite[Theorem~5.58]{Olver1993}
a~differential function is a conservation-law characteristic of~$\mathcal L$ if and only
if it is the characteristic of a (generalized) variational symmetry of~$\mathrm L$.

Since a generalized vector field is a variational symmetry of a Lagrangian
if and only if its evolutionary representative is~\cite[Proposition~5.32]{Olver1993},
we work only with evolutionary representatives of variational symmetries.
Denote by $\Upsilon$, $\Upsilon^{\rm triv}$ and $\Upsilon^{\rm q}$
the algebra (of evolutionary representatives) of variational symmetries of the Lagrangian~$\mathrm L$,
its subalgebra of trivial variational symmetries and
the quotient algebra of variational symmetries of this Lagrangian, i.e.,
$\Upsilon\subset\Sigma$,
$\Upsilon^{\rm triv}:=\Upsilon\cap\Sigma^{\rm triv}$ and
$\Upsilon^{\rm q}:=\Upsilon/\Upsilon^{\rm triv}$.
In contrast to $\Sigma^{\rm triv}$, the algebra~$\Upsilon^{\rm triv}$
does not consist of all generalized vector fields in the evolutionary form
whose characteristics vanish on solutions of~$\mathcal L$.
This is why one should carefully use
reductions of generalized symmetries by excluding derivatives in view of~$\mathcal L$
when working with variational symmetries, the space of which may not be closed with respect to such a reduction.
We also define the subspace of variational symmetries of order less than or equal to~$n$,
\[
\Upsilon^n=\big\{[Q]\in\Upsilon^{\rm q}\mid\exists\,\eta[u]\p_u\in[Q]\colon\ord\eta[u]\leqslant n\big\}, \quad n\in\mathbb N_0\cup\{-\infty\},
\]
and denote $\Upsilon^{[n]}=\Upsilon^n/\Upsilon^{n-1}$ for $n\in\mathbb N$,
$\Upsilon^{[0]}=\Upsilon^0/\Upsilon^{-\infty}$ and $\Upsilon^{[-\infty]}:=\Upsilon^{-\infty}$.
The space $\Upsilon^{[n]}$ can be interpreted as the space of $n$th order variational symmetries of~$\mathrm L$, $n\in\mathbb N_0\cup\{-\infty\}$.

\begin{lemma}\label{KG:lem:VarSyms}
If a linear generalized symmetry $Q\in\Lambda$ of the Klein--Gordon equation~$\mathcal L$
is a variational symmetry of the Lagrangian~$\mathrm L$, then $\ord Q\in2\mathbb N_0+1$.
\end{lemma}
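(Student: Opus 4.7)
The plan is to convert the hypothesis ``$Q$ is a variational symmetry'' into an operator identity on the coefficient operator $\mathscr D$ in $\eta=\mathscr Du$, and then read off the parity of $n:=\ord Q$ from a principal-symbol comparison.

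First, since $\mathrm L=-(u_xu_y+u^2)/2$ has Euler--Lagrange expression $E(\mathrm L)=u_{xy}-u=\mathscr Lu$ with $\mathscr L:=\mathrm D_x\mathrm D_y-1$ formally self-adjoint (because $\mathrm D_x^*=-\mathrm D_x$ and $\mathrm D_y^*=-\mathrm D_y$), Noether's theorem tells us that $Q=(\mathscr Du)\p_u$ is a variational symmetry of $\mathrm L$ if and only if $(\mathscr Du)(\mathscr Lu)$ is a total divergence on the infinite jet, equivalently $\mathrm E\bigl((\mathscr Du)(\mathscr Lu)\bigr)=0$, where $\mathrm E$ is the Euler operator. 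Both factors being linear in the jet variables, integration by parts yields
\[
\mathrm E\bigl((\mathscr Du)(\mathscr Lu)\bigr)=\mathscr D^*(\mathscr Lu)+\mathscr L^*(\mathscr Du)=(\mathscr D^*\mathscr L+\mathscr L\mathscr D)u,
\]
where $\mathscr L^*=\mathscr L$ is used in the second equality. Since the jet coordinates $u_\alpha$ are algebraically independent, vanishing of the right-hand side for all $u$ is equivalent to the operator identity
\[
\mathscr D^*\mathscr L+\mathscr L\mathscr D=0.
\]

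Next I would pass to principal symbols. Writing $\sigma_k(\cdot)$ for the top-degree part of the total symbol, a polynomial of degree~$k$ in the cotangent variables $\xi,\zeta$ with coefficients smooth in $x,y$, one has $\sigma_2(\mathscr L)=\xi\zeta$, and the classical formula for formal adjoints gives $\sigma_n(\mathscr D^*)=(-1)^n\sigma_n(\mathscr D)$. Consequently the degree-$(n+2)$ part of the symbol of $\mathscr L\mathscr D+\mathscr D^*\mathscr L$ equals
\[
\bigl(1+(-1)^n\bigr)\,\xi\zeta\,\sigma_n(\mathscr D),
\]
and it must vanish by the preceding step. Since $\mathscr D$ has order exactly~$n$ the polynomial $\sigma_n(\mathscr D)$ does not vanish, and neither does $\xi\zeta$, so we must have $1+(-1)^n=0$, i.e.\ $n$ is odd. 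The same argument also rules out $n=0$ (it would force $\sigma_0(\mathscr D)=0$, hence $\mathscr D=0$ and $Q=0$, contradicting nontriviality), thereby establishing $\ord Q\in 2\mathbb N_0+1$.

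The step that requires the most care is the upgrade from ``$\mathrm E(\cdot)=0$ as a differential function on the infinite jet'' to ``$\mathscr D^*\mathscr L+\mathscr L\mathscr D=0$ as a differential operator'', where the linearity of $\mathscr D$ and the algebraic independence of the jet coordinates $u_\alpha$ are essential. After that, the principal-symbol comparison is immediate and sidesteps any direct integration of the determining system for Noether characteristics.
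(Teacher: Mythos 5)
Your proof is correct and follows essentially the same route as the paper's: both reduce the variational-symmetry condition to the operator identity $\mathscr D^\dag\mathscr L+\mathscr L\mathscr D=0$ (the paper via Olver's criterion $\mathsf D_{\eta}^\dag(E(\mathrm L))+\mathsf D_{E(\mathrm L)}^\dag(\eta)=0$, you via the equivalent Euler-operator formulation) and then extract the parity of $\ord Q$ from the vanishing of the top-degree symbol $\bigl(1+(-1)^n\bigr)\sigma_n(\mathscr D)\,\sigma_2(\mathscr L)$. The only cosmetic point is that your separate treatment of $n=0$ is redundant, since $0$ is even and is already excluded by the parity argument.
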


\begin{proof}
In order for a generalized vector field~$Q$ in~$\Lambda$ to be a variational symmetry of~$\mathrm L$,
its characteristic~$\mathfrak Qu$ has to satisfy the criterion~\cite[Proposition~5.49]{Olver1993}
\[
\mathsf D_{\mathfrak Qu}^\dag(\mathfrak Lu)+\mathsf D^\dag_{\mathfrak Lu}(\mathfrak Qu)
=(\mathfrak Q^\dag\mathfrak L+\mathfrak L^\dag\mathfrak Q)u=0
\]
on the entire infinite-order jet space $\mathrm J^\infty(x,y|u)$.
Here the operator in total derivatives $\mathfrak Q$ corresponds to~$Q$,
$\mathfrak L$ is the operator in total derivatives that is associated with the equation~$\mathcal L$,
$\mathfrak L=\mathrm D_x\mathrm D_y-1$,
a constant summand in a differential operator denotes the multiplication operator by this constant,
$\mathsf D_F$ denotes the Fr\'echet derivative of a differential function~$F$,
and $\mathfrak B^\dag$ denotes the formally adjoint to a differential operator~$\mathfrak B$.
Hence we have the operator equality $\mathfrak Q^\dag\mathfrak L+\mathfrak L^\dag\mathfrak Q=0$.
Since the equation~$\mathcal L$ is the Euler--Lagrange equation of a Lagrangian,
the operator~$\mathfrak L$ is formally self-adjoint, $\mathfrak L^\dag=\mathfrak L$.
If $\ord Q$ were even, then the principal symbol of the left-hand side
of the operator equality $\mathfrak Q^\dag\mathfrak L+\mathfrak L^\dag\mathfrak Q=0$
would be equal to the product of the principal symbols of~$\mathfrak Q$ and~$\mathfrak L$ multiplied by two,
and hence this left-hand side could not be equal to zero.
Therefore, $\ord Q$ is odd.
\end{proof}

\begin{corollary}\label{KG:cor:VarSyms}
A linear generalized symmetry $Q\in\tilde\Lambda^{\rm q}$ of the Klein--Gordon equation~$\mathcal L$
is a variational symmetry of the Lagrangian~$\mathrm L$ if and only if
the corresponding operator~$\mathfrak Q$ is formally skew-adjoint, $\mathfrak Q^\dag=-\mathfrak Q$.
\end{corollary}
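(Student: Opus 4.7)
The plan is to read the corollary off the operator identity already extracted in the proof of Lemma~\ref{KG:lem:VarSyms}. That proof records the Noether criterion for $Q\in\Lambda$ in the form $\mathfrak Q^\dag\mathfrak L+\mathfrak L^\dag\mathfrak Q=0$, and since $\mathcal L$ is the Euler--Lagrange equation of a Lagrangian, $\mathfrak L=\mathrm D_x\mathrm D_y-1$ is formally self-adjoint, so the criterion simplifies to $\mathfrak Q^\dag\mathfrak L+\mathfrak L\mathfrak Q=0$. This is the identity I would take as the starting point.

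For the forward direction, I would invoke the commutation $\mathfrak L\mathfrak Q=\mathfrak Q\mathfrak L$ recorded immediately after the theorem, which is available for every $Q\in\tilde\Lambda^{\rm q}$. Substituting it into the criterion rewrites the condition as
\[
(\mathfrak Q^\dag+\mathfrak Q)\mathfrak L=0.
\]
Every $\mathfrak Q$ here lies in the subring of the Weyl algebra $A_2$ generated by $\mathrm D_x$, $\mathrm D_y$ and $\mathrm J=x\mathrm D_x-y\mathrm D_y$, so both factors are linear differential operators with polynomial coefficients. Such an algebra has no zero divisors, as can be seen by noting that the principal symbol of a product equals the product of the principal symbols and the polynomial ring in the symbol variables is a domain. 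Since $\mathfrak L\neq 0$, this yields $\mathfrak Q^\dag+\mathfrak Q=0$, i.e.\ $\mathfrak Q^\dag=-\mathfrak Q$.

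The converse is a one-line verification: if $\mathfrak Q^\dag=-\mathfrak Q$, then using $\mathfrak L\mathfrak Q=\mathfrak Q\mathfrak L$ once more,
\[
\mathfrak Q^\dag\mathfrak L+\mathfrak L\mathfrak Q=-\mathfrak Q\mathfrak L+\mathfrak Q\mathfrak L=0,
\]
which is exactly the Noether criterion, so $Q$ is a variational symmetry of~$\mathrm L$. I do not foresee a real obstacle; the one step that merits a sentence of justification is the cancellation of~$\mathfrak L$ from the right, which rests on the absence of zero divisors in the relevant algebra of differential operators and is otherwise the pivotal use of the characteristic-variable form $\mathfrak L=\mathrm D_x\mathrm D_y-1$ together with the commutation relation from the theorem.
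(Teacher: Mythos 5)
Your proposal is correct and follows essentially the same route as the paper's proof: reduce the Noether criterion to $(\mathfrak Q^\dag+\mathfrak Q)\mathfrak L=0$ via self-adjointness of $\mathfrak L$ and the commutation $\mathfrak{LQ=QL}$, cancel $\mathfrak L$, and reverse the implications for the converse. The only difference is that you explicitly justify the cancellation of $\mathfrak L$ by the absence of zero divisors in the algebra of differential operators with polynomial coefficients, a step the paper asserts without comment; this is a welcome clarification rather than a deviation.
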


\begin{proof}
For $Q\in\tilde\Lambda^{\rm q}$, the operators~$\mathfrak L$ and~$\mathfrak Q$ commute,  $\mathfrak{LQ=QL}$.
This implies
\[
0=\mathfrak Q^\dag\mathfrak L+\mathfrak L^\dag\mathfrak Q=\mathfrak Q^\dag \mathfrak L+\mathfrak L\mathfrak Q=\mathfrak Q^\dag\mathfrak L+\mathfrak Q\mathfrak L=(\mathfrak Q^\dag+\mathfrak Q)\mathfrak L,
\]
and, therefore, $\mathfrak Q^\dag+\mathfrak Q=0$, meaning the desired $\mathfrak Q^\dag=-\mathfrak Q$.
Turning all implications around completes the proof.
\end{proof}

\begin{remark*}
A thorough inspection of the proof of Lemma~\ref{KG:lem:VarSyms} shows
that the same assertion holds for linear variational symmetries of any Lagrangian of one dependent variable 
whose Euler--Lagrange equation is linear.
The assertion analogous to Corollary~\ref{KG:cor:VarSyms} additionally needs commuting
differential operators associated with these symmetries and with the Euler--Lagrange equation.
\end{remark*}

We change the basis of the algebra~$\tilde\Lambda^{\rm q}$ to
$\big((\mathfrak Q_{kl}u)\p_u,\,k,l\in\mathbb N_0,\,(\bar{\mathfrak Q}_{kl}u)\p_u,\,k\in\mathbb N_0,\,l\in\mathbb N\big)$,
where basis' elements are respectively associated with the operators
\begin{gather}\label{eq:IDFM:BasisLambda}
\mathfrak Q_{kl}=\left(\mathrm J+\frac l2\right)^k\mathrm D_x^l, \quad k,l\in\mathbb N_0,\qquad
\bar{\mathfrak Q}_{kl}=\left(\mathrm J-\frac l2\right)^k\mathrm D_y^l, \quad k\in\mathbb N_0,\quad l\in\mathbb N.
\end{gather}
The algebra~$\tilde\Lambda^{\rm q}$ is decomposed into the direct sum of two subspaces,
$\tilde\Lambda^{\rm q}=\tilde\Lambda^{\rm q}_-\dotplus\tilde\Lambda^{\rm q}_+$, where~$\tilde\Lambda^{\rm q}_-$
(resp.\ $\tilde\Lambda^{\rm q}_+$) is the subspace of elements in~$\tilde\Lambda^{\rm q}$
associated with formally skew-adjoint (resp.\ self-adjoint) operators.
Since
\[
\mathrm D_x^\dag=-\mathrm D_x, \quad
\mathrm D_y^\dag=-\mathrm D_y,\quad
\mathrm J^\dag=-\mathrm J,\quad
\mathrm D_x\mathrm J=(\mathrm J+1)\mathrm D_x, \quad
\mathrm D_y\mathrm J=(\mathrm J-1)\mathrm D_y,
\]
we have
\[
\mathfrak Q_{kl}^\dag=(-\mathrm D_x)^l\left(-\mathrm J+\frac l2\right)^k=(-1)^{k+l}\mathrm D_x^l\left(\mathrm J-\frac l2\right)^k=(-1)^{k+l}\mathfrak Q_{kl}
\]
and similarly $\bar{\mathfrak Q}^\dag_{kl}=(-1)^{k+l}\bar{\mathfrak Q}_{kl}$.
Therefore, the generalized vector fields corresponding to the operators~\eqref{eq:IDFM:BasisLambda}
with odd (resp.\ even) values of~$k+l$ constitute
a basis of the space~$\tilde\Lambda^{\rm q}_-$ (resp.\ $\tilde\Lambda^{\rm q}_+$),
\begin{gather*}
\tilde\Lambda^{\rm q}_-=\langle\,(\mathfrak Q_{k'0}u)\p_u,\,k'\in2\mathbb N_0+1,\
(\mathfrak Q_{kl}u)\p_u,\,(\bar{\mathfrak Q}_{kl}u)\p_u,\, k\in\mathbb N_0,\,l\in\mathbb N,\,k+l\in2\mathbb N_0+1 \rangle,
\\
\tilde\Lambda^{\rm q}_+=\langle\,(\mathfrak Q_{k'0}u)\p_u,\,k'\in2\mathbb N_0,\
(\mathfrak Q_{kl}u)\p_u,\,(\bar{\mathfrak Q}_{kl}u)\p_u,\,k\in\mathbb N_0,\,l\in\mathbb N,\,k+l\in2\mathbb N_0 \rangle.
\end{gather*}

\begin{theorem}\label{KG:thm:VarSyms}
The quotient algebra~$\Upsilon^{\rm q}$ of variational symmetries of the Lagrangian~$\mathrm L$
is naturally isomorphic to the algebra $\tilde\Upsilon^{\rm q}=\tilde\Lambda^{\rm q}_-\lsemioplus\tilde\Sigma^{-\infty}$.
\end{theorem}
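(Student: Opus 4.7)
My plan is to establish the natural isomorphism $\Upsilon^{\rm q}\simeq\tilde\Upsilon^{\rm q}$ in two steps: first show the inclusion $\tilde\Upsilon^{\rm q}\subset\Upsilon$ and that the induced map $\tilde\Upsilon^{\rm q}\to\Upsilon^{\rm q}$ is injective; second, prove surjectivity by showing that every coset of $\Upsilon^{\rm triv}$ in $\Upsilon$ meets $\tilde\Upsilon^{\rm q}$.

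For $\tilde\Upsilon^{\rm q}\subset\Upsilon$: an element $Q\in\tilde\Lambda^{\rm q}_-$ is variational by Corollary~\ref{KG:cor:VarSyms}, since its associated operator is formally skew-adjoint; for $Q=f(x,y)\p_u\in\tilde\Sigma^{-\infty}$ the Fr\'echet derivative $\mathsf D_f$ is zero, so the variational criterion collapses to $\mathfrak Lf=0$, which holds as $f\in\mathcal L$. Injectivity follows because $\tilde\Upsilon^{\rm q}\subset\tilde\Sigma^{\rm q}$, and the only element of $\tilde\Sigma^{\rm q}$ contained in $\Sigma^{\rm triv}$ (hence in the smaller $\Upsilon^{\rm triv}$) is $0$.

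For surjectivity, given $V\in\Upsilon$ with characteristic $\eta_V$, let $\tilde V$ be the canonical representative of $[V]\in\Sigma^{\rm q}$; by the main theorem on $\Sigma^{\rm q}$ it takes the form $\tilde V=\mathfrak Q_-u\,\p_u+\mathfrak Q_+u\,\p_u+f\p_u$ with skew- and self-adjoint parts $\mathfrak Q_\pm$ and some $f\in\mathcal L$. It suffices to prove $\mathfrak Q_+=0$: then $\tilde V\in\tilde\Upsilon^{\rm q}\subset\Upsilon$, so $V-\tilde V\in\Upsilon\cap\Sigma^{\rm triv}=\Upsilon^{\rm triv}$, and the coset $[V]\in\Upsilon^{\rm q}$ is represented by $\tilde V\in\tilde\Upsilon^{\rm q}$. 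To prove $\mathfrak Q_+=0$, I exploit the natural $\mathbb Z_{\geqslant 0}$-grading by total degree in $u$ and its derivatives. Because $\mathcal L$ is linear and $\mathrm L$ is quadratic, the operators $\mathfrak L$, $\mathsf D_\eta$ and formal adjoint all preserve this grading, and the variational criterion $\mathsf D_{\eta_V}^\dag(\mathfrak Lu)+\mathfrak L\eta_V=0$ on the full jet space splits by degree. Writing $\eta_V=\tilde V+\eta_T$ with $\eta_T$ vanishing on $\mathcal L$, the degree-one component reads $\eta_V^{(1)}=(\mathfrak Q_-+\mathfrak Q_+)u+A\mathfrak Lu$ for some linear operator $A$ (the linear part of $\eta_T$ must lie in the left module generated by $\mathfrak Lu$). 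A direct computation using $\mathfrak L^\dag=\mathfrak L$, $\mathfrak L\mathfrak Q_\pm=\mathfrak Q_\pm\mathfrak L$, and $\mathfrak Q_\pm^\dag=\mp\mathfrak Q_\pm$ yields
\[
\mathfrak L\bigl(2\mathfrak Q_++(A+A^\dag)\mathfrak L\bigr)=0
\]
as an operator identity. Since $\mathfrak L$ is not a zero divisor in the algebra of linear differential operators, $2\mathfrak Q_+=-(A+A^\dag)\mathfrak L$, whence $\mathfrak Q_+u$ vanishes on $\mathcal L$; being a canonical representative in $\tilde\Lambda^{\rm q}_+$, it must therefore be identically zero, and so $\mathfrak Q_+=0$.

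The principal obstacle is this final step: reduction modulo $\Sigma^{\rm triv}$ is coarser than reduction modulo $\Upsilon^{\rm triv}$, so the canonical representative $\tilde V$ of a variational symmetry need not itself satisfy the variational criterion, and Corollary~\ref{KG:cor:VarSyms} cannot be transported directly to $\tilde V$. The grading argument circumvents this by extracting just the degree-one consequence of the variational criterion for the original $V$, which happens to be enough to annihilate the self-adjoint part of its canonical linear operator.
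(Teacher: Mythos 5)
Your proof is correct, but it reaches the conclusion by a genuinely different route than the paper. The paper's proof temporarily reverts to the spacetime coordinates, puts $\mathcal L$ into Kovalevskaya form, and invokes the Mart\'inez Alonso lemma to show that every coset of $\Upsilon^{\rm triv}$ contains a unique reduced representative that coincides with the reduced form of the canonical element of $\tilde\Sigma^{\rm q}$; combined with Lemma~\ref{KG:lem:VarSyms} (odd order) and a dimension count $\dim\Upsilon^{[n]}=\dim\tilde\Lambda^{[n]}_-$ per order $n$, this yields the theorem and, as a by-product, the order-graded structure of $\Upsilon^{\rm q}$ used later for conservation laws. You instead stay entirely in the light-cone variables and replace both the Mart\'inez Alonso lemma and the dimension count by the observation that, since $\mathcal L$ is linear and the Lagrangian quadratic, the variational criterion $\mathsf D_{\eta}^\dag(\mathfrak Lu)+\mathfrak L\eta=0$ splits under the grading by polynomial degree in the jet fibre variables, so the degree-one (Taylor) component of any variational characteristic is itself variational; the operator identity $\mathfrak L\bigl(2\mathfrak Q_++(A+A^\dag)\mathfrak L\bigr)=0$ then kills the self-adjoint part of the canonical linear representative because $\mathfrak L$ is not a left zero divisor (a leading-symbol argument) and $\tilde\Lambda^{\rm q}$ contains no nonzero trivial symmetries (Lemma~\ref{lemma:IDFM:SigmaIsomorphism}). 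Two small points deserve to be spelled out: (i) that the degree-one part of a differential function vanishing on $\mathcal L$ vanishes on $\mathcal L$ (this uses that $\mathcal L$ is preserved by scaling of solutions) and is of the form $A\mathfrak Lu$ (right division of operators by $\mathfrak L$, with the remainder free of mixed derivatives and hence zero); and (ii) the Taylor-expansion argument for non-polynomial characteristics. Your approach buys a coordinate-free-of-Kovalevskaya argument that exploits linearity directly and needs no order-by-order counting; the paper's approach is less tied to linearity and delivers the isomorphisms $\Upsilon^{[n]}\simeq\tilde\Lambda^{[n]}_-$ for each order, which you would have to extract separately for the subsequent classification of conservation laws by order.
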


\begin{proof}
We revert to the coordinates $(x_0,x_1,u)$
and solve the equation~$\mathcal L$ with respect to the derivative $\p^2u/\p x_0^2$, $\p^2u/\p x_0^2=\p^2u/\p x_1^2-u$.
This gives a representation of~$\mathcal L$ in the (extended) Kovalevskaya form.
Lemma~3 in~\cite{MartinezAlonso1979} (which was given in~\cite{Olver1993} as Lemma~4.28)
reformulated for Euler--Lagrange equations in terms of variational symmetries of corresponding Lagrangians implies
that for an arbitrary generalized vector field~$Q$ in~$\Upsilon$,
the corresponding element~$[Q]_{\rm var}$ of~$\Upsilon^{\rm q}$ contains,
as the coset~$Q+\Upsilon^{\rm triv}$ in~$\Upsilon$,
a generalized vector field~$\breve Q$ in the reduced form that is obtained by excluding all derivatives of~$u$
with more than one differentiation with respect to~$x_0$ in view of~$\mathcal L$.
Moreover, $\breve Q$ is the only generalized vector field in the above reduced form
that belongs to the coset~$Q+\Upsilon^{\rm triv}$ in~$\Upsilon$.
It is also the only generalized vector field in the above reduced form
that belongs to the coset~$Q+\Sigma^{\rm triv}$ in~$\Sigma$.
The coset~$Q+\Sigma^{\rm triv}$ necessarily contains
exactly one element of $\tilde\Sigma^{\rm q}=\tilde\Lambda^{\rm q}\lsemioplus\tilde\Sigma^{-\infty}$,
which we denote by~$\tilde Q$.
Note that the used coordinate change preserves the linearity of elements of~$\Lambda$.
Therefore, $\breve Q$ is the reduced form of~$\tilde Q$, and hence $\breve Q\in\Lambda\lsemioplus\tilde\Sigma^{-\infty}$.
Now we can revert to the coordinates $(x,y,u)$.

For any linear system of differential equations,
characteristics of its Lie symmetries associated with the linear superposition of solutions are
conservation-law characteristics of this system.
Therefore, $\tilde\Sigma^{-\infty}\subset\Upsilon$.
Since different elements in~$\tilde\Sigma^{-\infty}$ belong to different elements in the quotient space~$\Upsilon^{\rm q}$
as cosets of~$\Upsilon^{\rm triv}$ in~$\Upsilon$,
and $\ord[Q]=-\infty$ for each $Q\in\tilde\Sigma^{-\infty}$,
the algebra~$\tilde\Sigma^{-\infty}$ is naturally isomorphic to~$\Upsilon^{[-\infty]}$.

By $\smash{\tilde\Lambda^{[n]}_-}$ we denote the subspace of~$\tilde\Lambda^{\rm q}_-$
that is spanned by basis elements of~$\tilde\Lambda^{\rm q}_-$ of order~$n$.
We have $\smash{\tilde\Lambda^{[n]}_-}=\{0\}$ for even~$n$, and if~$n$ is odd, then
\begin{gather*}
\smash{\tilde\Lambda^{[n]}_-}=\big\langle(\mathfrak Q_{n0}u)\p_u,\,(\mathfrak Q_{k,n-k}u)\p_u,\,(\bar{\mathfrak Q}_{k,n-k}u)\p_u,\,k=0,\dots,n-1\big\rangle.
\end{gather*}

Lemma~\ref{KG:lem:VarSyms} implies that if $Q\in\Lambda\cap\Upsilon$, then $\ord Q$ is odd.
Therefore, $\dim\Upsilon^{[n]}=0=\dim\smash{\tilde\Lambda^{[n]}_-}$ for even~$n$.
For odd~$n$, $\dim\Upsilon^{[n]}\leqslant\dim\Sigma^{[n]}=\dim\tilde\Sigma^{[n]}=\dim\tilde\Lambda^{[n]}_-<+\infty$.
On the other hand, $\smash{\tilde\Lambda^{[n]}_-}\subset\Upsilon$, and $\ord[Q]=n$ for each nonzero $Q\in\smash{\tilde\Lambda^{[n]}_-}$.
Hence different elements in~$\smash{\tilde\Lambda^{[n]}_-}$ belong to cosets of~$\Upsilon^{\rm triv}$ in~$\Upsilon$
that are elements of~$\Upsilon^n$ and belong to different cosets of~$\Upsilon^{n-1}$ in~$\Upsilon^n$.
Recall that the latter cosets are considered as elements of the twice quotient space~$\Upsilon^{[n]}$.
This implies that $\dim\smash{\tilde\Lambda^{[n]}_-}\leqslant\dim\Upsilon^{[n]}$.
In total, for odd~$n$ this gives that $\dim\smash{\tilde\Lambda^{[n]}_-}=\dim\Upsilon^{[n]}$,
and the subspace $\smash{\tilde\Lambda^{[n]}_-}$ of $\Upsilon$
is naturally isomorphic to the space~$\Upsilon^{[n]}$ via taking quotients twice.
Therefore, the subspace~$\Upsilon^n$ of~$\Upsilon^{\rm q}$ is naturally isomorphic to
the subspace $\tilde\Sigma^{-\infty}\dotplus\smash{\tilde\Lambda^{[0]}_-}\dotplus\dots\dotplus\smash{\tilde\Lambda^{[n]}_-}$ of~$\Upsilon$.
Then the algebra~$\Upsilon^{\rm q}$ is naturally isomorphic
to the algebra $\tilde\Upsilon^{\rm q}=\tilde\Lambda^{\rm q}_-\lsemioplus\tilde\Sigma^{-\infty}$.
Here the Lie bracket on~$\tilde\Upsilon^{\rm q}$ is defined similarly to the Lie bracket on~$\tilde\Sigma^{\rm q}$, i.e.,
as the Lie bracket of generalized vector fields,
where mixed derivatives arising due to the action of~$\mathrm D_x$ and~$\mathrm D_y$ not involved in~$\mathrm J$
should be substituted in view of the equation~$\mathcal L$ and its differential consequences.
\end{proof}

\begin{remark}\label{KG:rem:VarSym}
Cosets of~$\Upsilon^{\rm triv}$ in~$\Upsilon$
do not necessarily intersect the algebra~$\hat\Sigma^{\rm q}$,
i.e., they do not have canonical representatives in the evolutionary form
reduced on solutions of the equation~$\mathcal L$.
For example, the reduced counterpart $(\mathscr J^3u)\p_u$
of the variational symmetry $(\mathfrak Q_{30}u)\p_u=(\mathrm J^3u)\p_u$
of~$\mathrm L$
is not a variational symmetry of~$\mathrm L$
since the difference $(\mathscr J^3u)\p_u-(\mathrm J^3u)\p_u=3xy\mathrm J(u_{xy}-u)\p_u$ is not.
Recall that $\mathscr J:=x\mathscr D_x-y\mathscr D_y$.
In other words, the reduced evolutionary form of generalized symmetries of the Klein--Gordon equation~$\mathcal L$
is not appropriate in the course of the study of variational symmetries of~$\mathrm L$.
\end{remark}

\section{Conservation laws}\label{KG:sec:ConsLaws}

For each element in a set spanning the space~$\tilde\Upsilon^{\rm q}$,
we construct a conserved current of the corresponding conservation law.
Moreover, these conserved currents are of the simplest form
and of minimal order among equivalent conserved currents,
that is, their orders coincide with the orders of conservation laws containing them.
In the course of this construction,
we multiply the differential function~$\mathfrak Lu$
by the characteristic of a variational symmetry of~$\mathrm L$
and rewrite, ``integrating by parts'', this expression
in the form of a total divergence of a tuple of differential functions,
which is nothing else but a conserved current of~$\mathcal L$.

Thus, for any element~$f(x,y)\p_u$ of~$\tilde\Sigma^{-\infty}$,
the function~$f=f(x,y)$ is a solution of~$\mathcal L$, and we have
$f\mathfrak Lu=\mathrm D_x(fu_y)+\mathrm D_y(-f_xu)=\mathrm D_x(-f_yu)+\mathrm D_y(fu_x)$,
which yields the equivalent first-order conserved currents
\[
{\rm C}^0_f=(fu_y,-f_xu)\quad\mbox{and}\quad \bar{\rm C}^0_f=(-f_yu,fu_x).
\]

We can use a similar trick to derive a conserved current of~$\mathcal L$
using any $Q=(\mathfrak Qu)\p_u\in\tilde\Lambda^{\rm q}$. We get
\begin{gather*}
\mathrm D_x(-u\mathrm D_y\mathfrak Qu)+\mathrm D_y(u_x\mathfrak Qu)
=u_{xy}\mathfrak Qu-u\mathrm D_x\mathrm D_y\mathfrak Qu
=(\mathfrak Qu)\mathfrak Lu-u\mathfrak L\mathfrak Qu
\\ \qquad
=(\mathfrak Qu)\mathfrak Lu-u\mathfrak Q\mathfrak Lu
=(\mathfrak Qu-\mathfrak Q^\dag u)\mathfrak Lu+(\mathfrak Q^\dag u)\mathfrak Lu-u\mathfrak Q\mathfrak Lu.
\end{gather*}
Here we take into account that $\mathfrak L\mathfrak Q=\mathfrak Q\mathfrak L$ for $Q\in\tilde\Lambda^{\rm q}$.
The Lagrange identity (also called generalized Green's formula \cite[Section~12]{Zharinov1992})
implies that the differential function $(\mathfrak Q^\dag u)\mathfrak Lu-u\mathfrak Q\mathfrak Lu$
is the total divergence of a pair of differential functions
bilinearly depending on the tuples of total derivatives of $u$ and~$\mathfrak Lu$;
cf.\ \cite[Proposition~A.4]{Zharinov1992},
i.e., it is the total divergence
of a trivial conserved current of the equation~$\mathcal L$.
Therefore, $(\mathfrak Q-\mathfrak Q^\dag)u$ is a characteristic of the conservation law of~$\mathcal L$
that contains the conserved current
\[
\tilde{\rm C}_{\mathfrak Q}=(-u\mathrm D_y\mathfrak Qu,\,u_x\mathfrak Qu).
\]
For any $Q\in\tilde\Lambda^{\rm q}_+$, we have $\mathfrak Q^\dag=\mathfrak Q$,
i.e., the corresponding conservation law is zero.
For any $Q\in\tilde\Lambda^{\rm q}_-$, we have $\mathfrak Q^\dag=-\mathfrak Q$ and thus obtain
the characteristic $2\mathfrak Qu$ of a nonzero conservation law of~$\mathcal L$.
Running $Q$ through the basis of~$\tilde\Lambda^{\rm q}_-$ gives conservation laws
that are linearly independent since their characteristics are.
In view of Theorem~\ref{KG:thm:VarSyms}, these conservation laws
jointly with those containing conserved currents ${\rm C}^0_f$, $f\in\mathcal L$,
span the entire space of conservation laws of~$\mathcal L$.

\begin{proposition}
The space of conservation laws of the $(1+1)$-dimensional Klein--Gordon equation~$\mathcal L$
is naturally isomorphic to the space spanned by the conserved currents
${\rm C}^0_f$ and~$\tilde{\rm C}_{\mathfrak Q}$,
where the parameter function $f=f(x,y)$ runs through the solution set of~$\mathcal L$,
and the operator~$\mathfrak Q$ runs through the basis of~$\tilde\Lambda^{\rm q}_-$,
\[
(\,\mathfrak Q_{k'0},\,k'\in2\mathbb N_0+1,\
\mathfrak Q_{kl},\,\bar{\mathfrak Q}_{kl},\, k\in\mathbb N_0,\,l\in\mathbb N,\,k+l\in2\mathbb N_0+1\,).
\]
\end{proposition}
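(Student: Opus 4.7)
The plan is to assemble the proof from the ingredients already laid out just before the statement: Noether's theorem, the explicit structure of the variational symmetry algebra provided by Theorem~\ref{KG:thm:VarSyms}, and the elementary ``integration by parts'' identities exhibited for the two families of currents.

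First I would invoke Noether's theorem in the form cited earlier: a differential function is a conservation-law characteristic of~$\mathcal L$ if and only if it is the characteristic of a variational symmetry of~$\mathrm L$, and the space of local conservation laws of~$\mathcal L$ is naturally isomorphic to the quotient algebra~$\Upsilon^{\rm q}$. By Theorem~\ref{KG:thm:VarSyms} this algebra is naturally isomorphic to $\tilde\Upsilon^{\rm q}=\tilde\Lambda^{\rm q}_-\lsemioplus\tilde\Sigma^{-\infty}$. So the conservation laws are indexed, via their characteristics, by a basis of $\tilde\Sigma^{-\infty}$ (that is, by $f\in\mathcal L$) together with the basis of~$\tilde\Lambda^{\rm q}_-$, namely the operators $\mathfrak Q_{k'0}$ with $k'\in 2\mathbb N_0+1$ and $\mathfrak Q_{kl}$, $\bar{\mathfrak Q}_{kl}$ with $k+l$ odd.

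Next I would match each of these basis characteristics with one of the explicit conserved currents displayed right before the proposition. For $f\in\mathcal L$, the identity $f\mathfrak Lu=\mathrm D_x(fu_y)+\mathrm D_y(-f_xu)$ exhibits ${\rm C}^0_f$ as a conserved current whose characteristic is (a nonzero scalar multiple of) $f$, i.e.\ the characteristic of the variational symmetry $f\p_u\in\tilde\Sigma^{-\infty}$. For $Q=(\mathfrak Qu)\p_u\in\tilde\Lambda^{\rm q}_-$ the displayed computation yields $\mathrm D_x(-u\mathrm D_y\mathfrak Qu)+\mathrm D_y(u_x\mathfrak Qu)=(\mathfrak Q-\mathfrak Q^\dag)u\cdot\mathfrak Lu$ up to the total divergence of a trivial current, and since $\mathfrak Q^\dag=-\mathfrak Q$ on $\tilde\Lambda^{\rm q}_-$, this is $2(\mathfrak Qu)\mathfrak Lu$. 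Hence $\tilde{\rm C}_{\mathfrak Q}$ is a conserved current with characteristic $2\mathfrak Qu$, which is (up to a nonzero factor) exactly the characteristic of the variational symmetry~$Q$.

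To conclude, I would verify that the assignment sending each basis element of $\tilde\Upsilon^{\rm q}$ to the corresponding current in the list ${\rm C}^0_f$, $\tilde{\rm C}_{\mathfrak Q}$ extends linearly to a bijection between the formal linear span of these currents and~$\Upsilon^{\rm q}$. Injectivity is automatic: a linear combination of these currents is a trivial conserved current iff the corresponding linear combination of characteristics vanishes on solutions of~$\mathcal L$, and the characteristics $f$ (for $f$ ranging over a basis of~$\mathcal L$) together with $\{\mathfrak Qu\}$ (for $\mathfrak Q$ in the basis of $\tilde\Lambda^{\rm q}_-$) are linearly independent as elements of $\tilde\Sigma^{-\infty}\dotplus\tilde\Lambda^{\rm q}_-\subset\Upsilon$. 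Surjectivity is the content of Theorem~\ref{KG:thm:VarSyms} combined with Noether's theorem.

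The only real subtlety, and the one I expect to be the main obstacle, lies in step three: one must be careful that linear independence is tested in~$\Upsilon^{\rm q}$ rather than in the ambient space of differential functions, which is exactly what the natural isomorphism of Theorem~\ref{KG:thm:VarSyms} (and the associated twice-taken quotient construction) provides. Once this bookkeeping between cosets of $\Sigma^{\rm triv}$ and of $\Upsilon^{\rm triv}$ is handled as in the proof of Theorem~\ref{KG:thm:VarSyms}, the proposition is an immediate consequence of the explicit formulas already established.
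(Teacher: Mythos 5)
Your proposal is correct and follows essentially the same route as the paper: the text preceding the proposition establishes exactly the identities $f\mathfrak Lu=\mathrm D_x(fu_y)+\mathrm D_y(-f_xu)$ and $\mathrm D_x(-u\mathrm D_y\mathfrak Qu)+\mathrm D_y(u_x\mathfrak Qu)=(\mathfrak Qu-\mathfrak Q^\dag u)\mathfrak Lu+(\text{divergence of a trivial current})$, identifies the characteristic $2\mathfrak Qu$ for $\mathfrak Q^\dag=-\mathfrak Q$, and then appeals to the linear independence of the characteristics and to Theorem~\ref{KG:thm:VarSyms} together with Noether's theorem for spanning, just as you do. Your added remark about where the linear-independence test takes place (in $\Upsilon^{\rm q}$, via the coset bookkeeping of Theorem~\ref{KG:thm:VarSyms}) is a fair articulation of a point the paper leaves implicit, but it is not a different argument.
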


\begin{corollary}
Under the action of generalized symmetries of the $(1+1)$-dimensional Klein--Gordon equation~$\mathcal L$
on the space of conservation laws of this equation,
a generating set of conservation laws of~$\mathcal L$
is constituted by the single conservation law containing the conserved current \[(-u^2,u_x^2).\]
\end{corollary}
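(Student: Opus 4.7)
The plan is to identify the characteristic of the given conservation law and then to show that iterated action of appropriately chosen generalized symmetries on this single seed recovers, up to linear combinations, every element of the basis of conservation laws in the preceding proposition. First I would compute directly that $\mathrm D_x(-u^2)+\mathrm D_y(u_x^2)=2u_x(u_{xy}-u)$, whence the characteristic of $(-u^2,u_x^2)$ is $2u_x$, corresponding to the operator $\mathfrak P_0=2\mathrm D_x$ in $\tilde\Lambda^{\rm q}_-$. Next, for a linear generalized symmetry $(\mathfrak Ru)\partial_u$ with $\mathfrak R\in\tilde\Lambda^{\rm q}$, its prolongation sends the seed current to $(-2u\mathfrak Ru,\,2u_x\mathrm D_x\mathfrak Ru)$, and using $\mathfrak L\mathfrak R=\mathfrak R\mathfrak L$ together with the Lagrange identity for the self-adjoint operator $\mathfrak L$ and integration by parts in~$x$, I would show that the characteristic of the resulting conservation law equals $(\mathfrak P_0\mathfrak R+\mathfrak R^\dag\mathfrak P_0)u$ modulo trivial cosymmetries.

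To reach every basis element of $\tilde\Lambda^{\rm q}_-$, I would proceed by induction on the order. In low order, explicit choices suffice: taking $\mathfrak R=\mathrm D_x^{2k}$ yields the characteristic operator $4\mathrm D_x^{2k+1}=4\mathfrak Q_{0,2k+1}$; taking $\mathfrak R=\mathrm D_y^{2k}$ yields $4\mathrm D_x\mathrm D_y^{2k}$, which on $\mathcal L$ reduces to $4\mathrm D_y^{2k-1}=4\bar{\mathfrak Q}_{0,2k-1}$; and taking the self-adjoint element $\mathfrak R=\bar{\mathfrak Q}_{11}=(\mathrm J-\tfrac12)\mathrm D_y$ yields $4\mathrm J\mathrm D_x\mathrm D_y$, which on $\mathcal L$ equals $4\mathrm J=4\mathfrak Q_{10}$. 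For the inductive step, for each target basis element $\mathfrak Q\in\tilde\Lambda^{[n]}_-$ I would construct an operator $\mathfrak R$, built from products of $\mathrm D_x,\mathrm D_y,\mathrm J$, such that $\mathfrak P_0\mathfrak R+\mathfrak R^\dag\mathfrak P_0$ equals a nonzero multiple of $\mathfrak Q$ plus elements of $\tilde\Lambda^{[<n]}_-$ already in the orbit by the inductive hypothesis. To generate the $\tilde\Sigma^{-\infty}$-part, for each $f\in\mathcal L$ I would act with the symmetry $f\partial_u$ on the seed to obtain a conservation law with characteristic $2f_x$ (using $f_{xy}=f$), then act with $u_y\partial_u$ on this new conservation law to produce one with characteristic $-2f$; this covers every $\mathrm C^0_f$ up to scalar.

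The main obstacle is the inductive construction: for each target basis element of order $n$, one must exhibit an explicit symmetry operator whose action produces it modulo orbit elements of strictly lower order. This amounts to a combinatorial calculation in $\tilde\Lambda^{\rm q}$ based on the commutation relations $[\mathrm D_x,\mathrm J]=\mathrm D_x$, $[\mathrm D_y,\mathrm J]=-\mathrm D_y$, and the reduction $\mathrm D_x\mathrm D_y\equiv1\pmod{\mathfrak L}$.
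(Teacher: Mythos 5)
Your setup is sound and follows the same basic idea as the paper's proof (act with suitably chosen generalized symmetries on the single seed current), and your verification that the seed $(-u^2,u_x^2)$ has characteristic $2u_x=\mathfrak P_0u$ with $\mathfrak P_0=2\mathrm D_x$, as well as the induced action $\mathfrak P_0\mathfrak R+\mathfrak R^\dag\mathfrak P_0$ on linear characteristics and the three low-order instances, are all correct. The genuine gap is that the entire content of the corollary lies in the step you defer: you never exhibit, for a general basis element $\mathfrak Q$ of $\tilde\Lambda^{\rm q}_-$, an operator $\mathfrak R$ whose action produces it, and you yourself flag this induction as "the main obstacle". As stated, the inductive scheme is also shakier than it looks: only the self-adjoint part of $\mathfrak R$ contributes to the top-order part of $\mathfrak P_0\mathfrak R+\mathfrak R^\dag\mathfrak P_0$, and a naive count ($\dim\tilde\Lambda^{[n-1]}_+=2n-1$ against $\dim\tilde\Lambda^{[n]}_-=2n+1$) shows you cannot reach all targets of order $n$ from $\mathfrak R$'s of order $n-1$; you are forced to use higher-order $\mathfrak R$'s together with the on-shell reduction $\mathrm D_x\mathrm D_y\equiv1$, exactly as in your $\mathrm D_y^{2k}$ and $\bar{\mathfrak Q}_{11}$ examples, and no uniform recipe is given.

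In fact no induction is needed, and your own framework closes the gap in one line: for any skew-adjoint $\mathfrak Q$, take $\mathfrak R=\tfrac12\mathrm D_y\mathfrak Q$. Then $\mathfrak P_0\mathfrak R+\mathfrak R^\dag\mathfrak P_0=\mathrm D_x\mathrm D_y\mathfrak Q-\mathfrak Q^\dag\mathrm D_y\mathrm D_x=\mathfrak Q\mathrm D_x\mathrm D_y+\mathfrak Q\mathrm D_y\mathrm D_x=2\mathfrak Q\mathrm D_x\mathrm D_y$, which reduces on solutions to $2\mathfrak Q$, i.e., the characteristic of the target conservation law; here $(\mathrm D_y\mathfrak Qu)\p_u$ is a legitimate generalized symmetry since $\mathrm D_y$ and $\mathfrak Q$ are both symmetry operators of~$\mathcal L$. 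This is precisely the paper's proof, carried out at the level of conserved currents rather than characteristics: acting with $\tfrac12(\mathrm D_y\mathfrak Qu)\p_u$ on $(-u^2,u_x^2)$ gives $(-u\mathrm D_y\mathfrak Qu,\,u_x\mathrm D_x\mathrm D_y\mathfrak Qu)$, equivalent to $\tilde{\rm C}_{\mathfrak Q}$, and acting with $\tfrac12f_y\p_u$ gives $\bar{\rm C}^0_f$ directly, which also replaces your two-step detour for the $\tilde\Sigma^{-\infty}$ part. I recommend replacing the inductive plan by this uniform choice.
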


\begin{proof}
The actions of the generalized symmetries $\frac12f_y\p_u$ and $\frac12(\mathrm D_y\mathfrak Qu)\p_u$
on the conserved current $(-u^2,u_x^2)$ give
the conserved currents $\bar{\rm C}^0_f=(-f_yu,fu_x)$ and $(-u\mathrm D_y\mathfrak Qu,u_x\mathrm D_x\mathrm D_y\mathfrak Qu)$,
which are equivalent to~${\rm C}^0_f$ and~$\tilde{\rm C}_{\mathfrak Q}$, respectively.
\end{proof}

The order of the conserved current~$\tilde{\rm C}_{\mathfrak Q}$
is greater than the order of the corresponding conservation law.
This is why we compute a conserved current of minimal order with characteristic~$\mathfrak Qu$,
where the generalized vector field $(\mathfrak Qu)\p_u$ runs through
the chosen basis elements~$(\mathfrak Q_{kl}u)\p_u$ of~$\tilde\Lambda^{\rm q}_-$, for each of which $k+l$ is odd.
We consider two cases, when $k$ is odd and when $k$ is even.

In the first case, we denote~$k'=(k-1)/2$ and~$l'=l/2$.
Note that $\mathrm J=\mathrm D_x\circ x-\mathrm D_y\circ y$.
Hence $\mathfrak Q_{kl}=\mathrm D_x^{l'}\mathrm J^k\mathrm D_x^{l'}$ and
\begin{gather*}
\begin{split}
(\mathfrak Q_{kl}u)\mathfrak Lu
\noprint{
=\mathrm D_x\sum\limits_{l''=0}^{l'-1}(-1)^{l''}\left(\mathrm D_x^{l'-l''-1}\mathrm J^k\mathrm D_x^{l'}u\right)\mathrm D_x^{l''}\mathfrak Lu
+(-1)^{l'}\left(\mathrm J^k\mathrm D_x^{l'}u\right)\mathrm D_x^{l'}\mathfrak Lu
\\ \qquad
=\mathrm D_x\sum\limits_{l''=0}^{l'-1}(-1)^{l''}\left(\mathrm D_x^{l'-l''-1}\mathrm J^k\mathrm D_x^{l'}u\right)\mathrm D_x^{l''}\mathfrak Lu
+\mathrm J\sum\limits_{k''=0}^{k'-1}(-1)^{l'+k''}\left(\mathrm J^{2k'-k''}\mathrm D_x^{l'}u\right)\mathrm J^{k''}\mathrm D_x^{l'}\mathfrak Lu
\\ \qquad\quad{}
+(-1)^{l'+k'}\left(\mathrm J^{k'+1}\mathrm D_x^{l'}u\right)\mathrm J^{k'}\mathrm D_x^{l'}\mathfrak Lu
\\ \qquad
=\mathrm D_x\sum\limits_{l''=0}^{l'-1}(-1)^{l''}\left(\mathrm D_x^{l'-l''-1}\mathrm J^k\mathrm D_x^{l'}u\right)\mathrm D_x^{l''}\mathfrak Lu
+\mathrm J\sum\limits_{k''=0}^{k'-1}(-1)^{l'+k''}\left(\mathrm J^{2k'-k''}\mathrm D_x^{l'}u\right)\mathrm J^{k''}\mathrm D_x^{l'}\mathfrak Lu
\\ \qquad\quad{}
+(-1)^{l'+k'}\left(\mathrm J^{k'+1}\mathrm D_x^{l'}u\right)\mathrm J^{k'}\mathrm D_x^{l'}(\mathrm D_x\mathrm D_y-1)u
\\ \qquad
}
={}&\mathrm D_x\sum\limits_{l''=0}^{l'-1}(-1)^{l''}\left(\mathrm D_x^{l'-l''-1}\mathrm J^k\mathrm D_x^{l'}u\right)\mathrm D_x^{l''}\mathfrak Lu
\\[.5ex]
&+\mathrm J\sum\limits_{k''=0}^{k'-1}(-1)^{l'+k''}\left(\mathrm J^{2k'-k''}\mathrm D_x^{l'}u\right)\mathrm J^{k''}\mathrm D_x^{l'}\mathfrak Lu
\\[.5ex]
&+\frac{(-1)^{l'+k'}}2\left(
x\mathrm D_y(\mathrm D_x\mathrm J^{k'}\mathrm D_x^{l'}u)^2-y\mathrm D_x(\mathrm D_y\mathrm J^{k'}\mathrm D_x^{l'}u)^2
-\mathrm J(\mathrm J^{k'}\mathrm D_x^{l'}u)^2\right),
\end{split}
\end{gather*}
which gives, up to the equivalence of conserved currents of~$\mathcal L$ and their rescaling, the conserved current
\[
{\rm C}^1_{k'l'}=\left(
-y(\mathrm D_y\mathrm J^{k'}\mathrm D_x^{l'}u)^2-x(\mathrm J^{k'}\mathrm D_x^{l'}u)^2,\
 x(\mathrm D_x\mathrm J^{k'}\mathrm D_x^{l'}u)^2+y(\mathrm J^{k'}\mathrm D_x^{l'}u)^2\right)
\]
of order~$k'+l'+1=(k+l+1)/2$, which is minimal for the conserved currents related to the characteristic~$\mathfrak Q_{kl}u$.

If $k$ is even, then~$l$ is odd and we denote~$k'=k/2$ and~$l'=(l-1)/2$.
Hence
\[\mathfrak Q_{kl}
=\mathrm D_x^{l'}(\mathrm J+1/2)^{k'}\mathrm D_x(\mathrm J-1/2)^{k'}\mathrm D_x^{l'}
=\mathrm D_x^{l'+1}(\mathrm J-1/2)^k\mathrm D_x^{l'}
=\mathrm D_x^{l'}(\mathrm J+1/2)^k\mathrm D_x^{l'+1}
\] and
\begin{gather*}
\begin{split}
(\mathfrak Q_{kl}u)\mathfrak Lu
={}&\mathrm D_x\sum\limits_{l''=0}^{l'-1}(-1)^{l''}\left(\mathrm D_x^{l'-l''}\left(\mathrm J-\frac12\right)^k\mathrm D_x^{l'}u\right)\mathrm D_x^{l''}\mathfrak Lu
\\[.5ex]
&+\mathrm J\sum\limits_{k''=0}^{k'-1}(-1)^{l'+k''}\left(\left(\mathrm J+\frac12\right)^{k-k''-1}\mathrm D_x^{l'+1}u\right)
\left(\mathrm J-\frac12\right)^{k''}\mathrm D_x^{l'}\mathfrak Lu
\\[.5ex]
&+\frac{(-1)^{l'+k'}}2\left(
 \mathrm D_y\bigg(\mathrm D_x\left(\mathrm J-\frac12\right)^{k'}\mathrm D_x^{l'}u\bigg)^2
-\mathrm D_x\bigg(\left(\mathrm J-\frac12\right)^{k'}\mathrm D_x^{l'}u\bigg)^2
\right).
\end{split}
\end{gather*}
Up to the equivalence of conserved currents of~$\mathcal L$ and multiplying them by constants, this leads to the conserved current
\[
{\rm C}^2_{k'l'}=\left(
-\bigg(\left(\mathrm J-\frac12\right)^{k'}\mathrm D_x^{l'}u\bigg)^2,\
 \bigg(\mathrm D_x\left(\mathrm J-\frac12\right)^{k'}\mathrm D_x^{l'}u\bigg)^2\right)
\]
of order~$k'+l'+1=(k+l+1)/2$,
which is again minimal for the conserved currents related to the characteristic~$\mathfrak Q_{kl}u$.
Since the permutation of~$x$ and~$y$ is a discrete point symmetry transformation of the equation~$\mathcal L$,
a conserved current associated with the variational symmetry~$(\bar{\mathfrak Q}_{kl}u)\p_u$ of the Lagrangian~$\mathrm L$, for which $k+l$ is odd,
can be constructed by this permutation
either from the conserved current~${\rm C}^1_{k'l'}$ if $k$ is odd
or from the conserved current~${\rm C}^2_{k'l'}$ if $k$ is even,
where again $k'$ and~$l'$ denote the integer parts of $k/2$ and~$l/2$, respectively.
We obtain
\begin{gather*}
\bar{\rm C}^1_{k'l'}=\left(
 y(\mathrm D_y\mathrm J^{k'}\mathrm D_y^{l'}u)^2+x(\mathrm J^{k'}\mathrm D_y^{l'}u)^2,\
-x(\mathrm D_x\mathrm J^{k'}\mathrm D_y^{l'}u)^2-y(\mathrm J^{k'}\mathrm D_y^{l'}u)^2\right),
\\[.5ex]
\bar{\rm C}^2_{k'l'}=\left(
\bigg(\mathrm D_y\left(\mathrm J+\frac12\right)^{k'}\mathrm D_y^{l'}u\bigg)^2,\
-\bigg(\left(\mathrm J+\frac12\right)^{k'}\mathrm D_y^{l'}u\bigg)^2\right).
\end{gather*}

We sum up the above construction of conserved currents as the following assertion.

\begin{theorem}\label{KG:theor:ConserLaws}
The space of conservation laws of the $(1+1)$-dimensional Klein--Gordon equation~$\mathcal L$
is naturally isomorphic to the space spanned by the conserved currents
\[
{\rm C}^1_{k'l'},\ k'\in\mathbb N_0,\ l'\in\mathbb N,\quad
\bar{\rm C}^1_{k'l'},\ {\rm C}^2_{k'l'},\ \bar{\rm C}^2_{k'l'},\ k',l'\in\mathbb N_0,\quad
{\rm C}^0_f,
\]
where the parameter function $f=f(x,y)$ runs through the solution set of~$\mathcal L$.
The order of conserved currents ${\rm C}_{k'l'}$'s is equal to $k'+l'+1$, and
$\ord{\rm C}^0_f=1$.
\end{theorem}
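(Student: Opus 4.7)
The plan is to assemble the theorem by combining the preceding Proposition, which identifies the space of conservation laws with the span of $\mathrm C^0_f$ together with $\tilde{\mathrm C}_{\mathfrak Q}$ as $\mathfrak Q$ ranges over the chosen basis of $\tilde\Lambda^{\rm q}_-$, with the explicit equivalences already worked out in the computations that precede the theorem statement. The only new assertions beyond the Proposition are that each $\tilde{\mathrm C}_{\mathfrak Q_{kl}}$ and $\tilde{\mathrm C}_{\bar{\mathfrak Q}_{kl}}$ (with $k+l$ odd) can be replaced, modulo trivial conserved currents, by one of $\mathrm C^1_{k'l'}$, $\mathrm C^2_{k'l'}$, $\bar{\mathrm C}^1_{k'l'}$, $\bar{\mathrm C}^2_{k'l'}$, and that the orders of the resulting currents are as stated and minimal in their equivalence classes.

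For the reduction step, I would use the factorization $\mathrm J=\mathrm D_x\circ x-\mathrm D_y\circ y$ and the commutation rules $\mathrm D_x\mathrm J=(\mathrm J+1)\mathrm D_x$, $\mathrm D_y\mathrm J=(\mathrm J-1)\mathrm D_y$ to rewrite each operator $\mathfrak Q_{kl}$ in a symmetrically split form, $\mathrm D_x^{l'}\mathrm J^k\mathrm D_x^{l'}$ when $k$ is odd (with $k=2k'{+}1$, $l=2l'$) and $\mathrm D_x^{l'}(\mathrm J+\tfrac12)^k\mathrm D_x^{l'+1}$ when $k$ is even (with $k=2k'$, $l=2l'{+}1$). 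Feeding this into $(\mathfrak Q_{kl}u)\mathfrak Lu$ and repeatedly integrating by parts in $\mathrm D_x$ and $\mathrm J$ (the latter using precisely the factorization above to produce perfect squares) isolates the nontrivial piece, which is a total divergence whose conserved current is exactly $\mathrm C^1_{k'l'}$ or $\mathrm C^2_{k'l'}$. All leftover terms either contain a factor of $\mathfrak Lu$ or one of its derivatives, hence are trivial on~$\mathcal L$. The barred counterparts follow at once from the discrete point symmetry $x\leftrightarrow y$ of~$\mathcal L$, which exchanges the families $\mathfrak Q_{kl}$ and $\bar{\mathfrak Q}_{kl}$ and carries $\mathrm C^i_{k'l'}$ to $\bar{\mathrm C}^i_{k'l'}$.

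The order assertion $k'+l'+1$ is immediate by inspection of the explicit formulas: the highest derivative of $u$ that enters a component of $\mathrm C^i_{k'l'}$ or $\bar{\mathrm C}^i_{k'l'}$ has total order $k'+l'+1$. Similarly, $\ord\mathrm C^0_f=1$ is obvious, and these conserved currents are linearly independent since their characteristics $2\mathfrak Q_{kl}u$, $2\bar{\mathfrak Q}_{kl}u$, $f$ are. Together with the Proposition, this yields the claimed natural isomorphism.

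The main obstacle is the minimality claim: the construction gives an upper bound on the order of each representative, but one must also argue that no equivalent conserved current has strictly smaller order. This follows by comparing with the characteristic: the characteristic $\mathfrak Q_{kl}u$ has order $k+l=2(k'+l')+1$, and for a second-order PDE the order of a characteristic of a conservation law cannot exceed the order of any representative conserved current by more than one; hence the order of the conservation law is at least $k'+l'$, and an inspection of the leading-order terms, using the explicit form of the Fr\'echet derivative of $\mathrm L$, rules out the value $k'+l'$ and forces $k'+l'+1$. The bookkeeping in this lower-bound step is technical but not conceptually deep, and the upper bound from the explicit formula then matches it.
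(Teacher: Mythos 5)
Your construction of the currents themselves follows the paper's route exactly: starting from the Proposition, splitting each $\mathfrak Q_{kl}$ symmetrically via $\mathrm J=\mathrm D_x\circ x-\mathrm D_y\circ y$ and the commutation rules, integrating by parts in $\mathrm D_x$ and $\mathrm J$ to peel off total divergences whose remainders carry a factor of $\mathfrak Lu$, and transporting the result by the discrete symmetry $x\leftrightarrow y$ to get the barred families. That part is sound and is what the paper does.

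The genuine gap is in your minimality argument. The inequality you invoke --- that the order of a characteristic of a conservation law cannot exceed the order of a representative conserved current by more than one --- is false here, and if it were true it would refute the theorem rather than prove it: the characteristic $\mathfrak Q_{kl}u$ has order $k+l=2(k'+l')+1$, so your bound would force every conserved current in that conservation law to have order at least $2(k'+l')$, which exceeds the claimed minimal order $k'+l'+1$ as soon as $k'+l'\geqslant 2$ (concretely, $\mathrm J^5u$ is a characteristic of order $5$ whose conservation law contains the order-$3$ current ${\rm C}^1_{20}$). Your subsequent step ``hence the order of the conservation law is at least $k'+l'$'' also does not follow arithmetically from the stated premise. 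The correct relation for this second-order equation goes the other way and is quadratic rather than additive: a conserved current of order $r$ yields (via the higher Euler operators) a reduced characteristic of order at most $2r$, so $2r\geqslant k+l$, and since $k+l$ is odd and $r$ is an integer this gives $r\geqslant(k+l+1)/2=k'+l'+1$, matching the upper bound from the explicit formulas. The quadratic, not linear, dependence of characteristic order on current order is exactly why the paper's minimal currents have roughly half the order of their characteristics, and your argument needs to be rebuilt on that estimate.
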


In other words, the conserved currents ${\rm C}^1_{k'l'}$, $k'\in\mathbb N_0$, $l'\in\mathbb N$,
$\bar{\rm C}^1_{k'l'}$, ${\rm C}^2_{k'l'}$, $\bar{\rm C}^2_{k'l'}$, $k',l'\in\mathbb N_0$,
with $k'+l'=n-1$ represent a complete (up to adding lower-order conservation laws)
set of linearly independent $n$th order conservation laws of~$\mathcal L$ if $n\geqslant2$.
The space of first-order conservation laws is spanned by those with conserved currents
$\bar{\rm C}^1_{00}$, ${\rm C}^2_{00}$, $\bar{\rm C}^2_{00}$ and ${\rm C}^0_f$,
where the parameter function $f=f(x,y)$ runs through the solution set of~$\mathcal L$.

\begin{corollary}
Up to adding low-order conservation laws, the Klein--Gordon equation~$\mathcal L$ possesses
$4n-1$ linearly independent conservation laws of order~$n$ if $n\geqslant2$,
 and an infinite number of linearly independent first-order conservation laws.
\end{corollary}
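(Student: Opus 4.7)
The plan is to derive this corollary as an immediate combinatorial consequence of Theorem~\ref{KG:theor:ConserLaws}. That theorem already establishes a natural isomorphism between the space of conservation laws of~$\mathcal L$ and the span of the explicit conserved currents ${\rm C}^1_{k'l'}$, $\bar{\rm C}^1_{k'l'}$, ${\rm C}^2_{k'l'}$, $\bar{\rm C}^2_{k'l'}$ and ${\rm C}^0_f$, and records that $\ord{\rm C}_{k'l'}=k'+l'+1$ and $\ord{\rm C}^0_f=1$. So no additional analytic input is required; what remains is to count, for each fixed order~$n$, how many basis elements in the isomorphic model have order exactly~$n$.

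First, for $n\geqslant 2$, I would enumerate all pairs $(k',l')\in\mathbb N_0^{\,2}$ with $k'+l'=n-1$ inside each of the four families, noting that the parameter ranges stated in Theorem~\ref{KG:theor:ConserLaws} differ across families. The family ${\rm C}^1_{k'l'}$ is restricted to $l'\in\mathbb N$, which produces $n-1$ admissible pairs $(k',l')=(n-1-l',l')$ with $l'\in\{1,\dots,n-1\}$. Each of the three remaining families $\bar{\rm C}^1_{k'l'}$, ${\rm C}^2_{k'l'}$, $\bar{\rm C}^2_{k'l'}$ allows $k',l'\in\mathbb N_0$ independently and thus contributes exactly $n$ pairs with $k'+l'=n-1$. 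Summing, one obtains $(n-1)+3n=4n-1$ linearly independent $n$th order conservation laws, as claimed. For $n=1$, I would invoke the fact that the basis contains the continuum ${\rm C}^0_f$ parameterized by an arbitrary solution $f\in\mathcal L$; since the solution set of the Klein--Gordon equation is infinite-dimensional, so is the space of first-order conservation laws.

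The only genuine issue—and really the only thing that needs to be checked carefully—is the asymmetric restriction $l'\geqslant 1$ in the family ${\rm C}^1_{k'l'}$, because allowing $l'=0$ would seemingly give $n$ rather than $n-1$ contributors and destroy the count. The resolution, which I would verify by direct inspection of the defining formulas, is that ${\rm C}^1_{k'0}=-\bar{\rm C}^1_{k'0}$, so the $l'=0$ case of the first family is already represented inside the $\bar{\rm C}^1$ family and must be excluded to avoid double-counting; this matches the parameter range dictated in Theorem~\ref{KG:theor:ConserLaws} and yields exactly $4n-1$ independent elements at each order $n\geqslant2$.
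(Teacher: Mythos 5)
Your counting argument is correct and is exactly the argument implicit in the paper, which derives the corollary directly from Theorem~\ref{KG:theor:ConserLaws} by enumerating the basis currents with $k'+l'=n-1$ (the paper states this enumeration in the paragraph following the theorem without further proof). Your verification that ${\rm C}^1_{k'0}=-\bar{\rm C}^1_{k'0}$, which justifies the restriction $l'\in\mathbb N$ in the first family and prevents double-counting, is a correct and worthwhile check of the one delicate point.
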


\begin{remark*}
Replacing the operators~$\mathrm D_x$, $\mathrm D_y$ and~$\mathrm J$
by~$\mathscr D_x$, $\mathscr D_y$ and~$\mathscr J$, respectively,
in constructed conserved currents,
we obtain equivalent conserved currents that are reduced in view of the solution set of~$\mathcal L$.
\end{remark*}

\section{Conclusion}\label{KG:sec:Conclusion}

The consideration in the present paper has several interesting aspects, which are worth recalling.
Its main specific feature is
that it is essentially based on the representation~$\mathcal L$: $u_{xy}=u$
of the (1+1)-dimensional Klein--Gordon equation in the light-cone variables,
which cannot be adapted,
in contrast to the representation in the standard spacetime variables,
as an (extended) Kovalevskaya form of this equation.%
\footnote{%
See~\cite{PopovychBihlo2020} for the definition of the extended Kovalevskaya form of systems of differential equations
and a discussion of significance of this form in the theory of conservation laws. 
Systems of a bit more restrictive form are called normal systems~\cite{MartinezAlonso1979} 
or Cauchy--Kowalevsky systems in a weak sense (resp., pseudo CK systems in short)~\cite{Tsujishita1982}.
}
There are only a few papers in the literature,
where the entire spaces of generalized symmetries and, especially, conservation laws
were computed for (systems of) differential equations  
that are inconvenient for representing in the extended Kovalevskaya form \cite{DuzhinTsujishita1984,Sharomet1989}
or lack such a representation at all 
\cite{AncoPohjanpelto2001,AncoPohjanpelto2003,AncoPohjanpelto2004,AndersonTorre1996,Pohjanpelto2004,PohjanpeltoAnco2008}.
Moreover, in \mbox{\cite{DuzhinTsujishita1984,Sharomet1989}} the least upper bounds for orders of reduced cosymmetries were low,
2 and $-\infty$, respectively,
each equivalence class of cosymmetries contained a conservation-law characteristic,
and the sufficient number of linearly independent conservation laws had been known~\cite{DuzhinTsujishita1984}
or could be easily derived directly~\cite{Sharomet1989}.
This is why employing the equation representations
different from the extended Kovalevskaya form
created no obstacles for selecting conservation-law characteristics among cosymmetries in these papers
although, in general, such a selection may be a nontrivial problem.
Thus, the present paper provides one of a few examples of studying conservation laws of a system of differential equations
that is not in the extended Kovalevskaya form and possesses conservation laws of arbitrarily high order
as well as cosymmetries of arbitrarily high order that are not equivalent to conservation-law characteristics, 
cf.~\cite{AncoPohjanpelto2001,AncoPohjanpelto2003,AncoPohjanpelto2004,AncoThe2005}.
To get around the complication in the course of selecting variational symmetries among generalized ones
for the representation of the Klein--Gordon equation~$\mathcal L$ in the light-cone variables~$x$ and~$y$,
we have temporarily switched to the standard form of the Klein--Gordon equation
for applying the Mart\'inez Alonso lemma~\cite[Lemma~3]{MartinezAlonso1979}.
That the transitions between the standard spacetime and the light-cone variables
preserve the linearity of characteristics of generalized symmetries
allowed us to prove that each nonnegative-order coset of variational symmetries
contains a linear symmetry.
All the other computations were carried out in the light-cone variables.

Despite the above complication,
the representation of the Klein--Gordon equation~$\mathcal L$ in the light-cone variables~$x$ and~$y$
is preferable to the standard one.
The choice of it is paid off by virtue of the facts
that it is more compact and
the differentiations with respect to~$x$ and~$y$ are inverse to each other on solutions of~$\mathcal L$, 
$\mathrm D_x|_{\mathcal L}^{}\circ\mathrm D_y|_{\mathcal L}^{}=\mathrm{id}|_{\mathcal L}^{}$.
The latter enables us to choose the jet coordinates $(t,x,u_k,k\in\mathbb Z)$ on~$\mathcal L^{(\infty)}$,
which are numerated by a single integer.
This simplifies the entire consideration,
including the reduced operators of total derivatives~$\mathscr D_x$ and~$\mathscr D_y$,
the determining equations for generalized symmetries of~$\mathcal L$
and the process of solving thereof.
In contrast to the standard spacetime coordinates,
the use of light-cone variables in the course of confining to the solution set of the Klein--Gordon equation
also allows us to preserve the equality of independent variables, which is intrinsic to this equation.
As a result, both the constructed spaces of canonical representatives
for equivalence classes of generalized symmetries of~$\mathcal L$
admit bases that are symmetrical with respect to~$x$ and~$y$.

The procedure of finding generalized symmetries of~$\mathcal L$ includes the standard techniques
of computing the dimension of the space of reduced generalized symmetries of each finite order
and of generating the necessary amount of linearly independent symmetries by recursion operators.
In fact, for the latter it suffices to use only the recursion operators,
corresponding to the Lie symmetries~$\p_x$, $\p_y$ and~$x\p_x-y\p_y$ of~$\mathcal L$.
To show that the generation produces no trivial symmetries,
we have evaluated the constructed generalized symmetries
on a family of solutions of~$\mathcal L$ parameterized by a nonzero real constant,
see the proof of Lemma~\ref{lemma:IDFM:SigmaIsomorphism}.
From this perspective, the entire algebra~$\tilde\Sigma^{\rm q}$ (resp.\ $\hat\Sigma^{\rm q}$)
of canonical representatives for equivalence classes of generalized symmetries of~$\mathcal L$
is spanned by the generalized vector fields that are related to the linear superposition of solutions of~$\mathcal L$
or generated from the single Lie symmetry~$u\p_u$ of~$\mathcal L$
by means of the recursion operators~$\mathrm D_x$, $\mathrm D_y$ and $\mathrm J$
(resp.\ $\mathscr D_x$, $\mathscr D_y$ and $\mathscr J$).
The algebra~$\hat\Sigma^{\rm q}$ is the collection of generalized symmetries of~$\mathcal L$
reduced on the solution set of~$\mathcal L$, thus being a standard object.
Moreover, the elements of~$\hat\Sigma^{\rm q}$ are represented in a compact form, in particular,
due to the obtained compact representation of the reduced operators of total derivatives~$\mathscr D_x$ and $\mathscr D_y$.
Nevertheless, the algebra~$\hat\Sigma^{\rm q}$ is inappropriate for use
in the description of variational symmetries of the equation~$\mathcal L$, see Remark~\ref{KG:rem:VarSym}.
This is why we have paid a more attention to another collection
of canonical representatives for equivalence classes of generalized symmetries of~$\mathcal L$,
the algebra~$\tilde\Sigma^{\rm q}$,
which does not have the above disadvantage of the algebra~$\hat\Sigma^{\rm q}$.
In order to efficiently single out variational symmetries among elements of the algebra~$\tilde\Sigma^{\rm q}$,
we have made a basis change in this algebra,
so that the subspace of skew-adjoint operators, which are naturally associated with variational symmetries,
is evident in the new basis.

The space of conservation laws of~$\mathcal L$ is expectedly computed using Noether's theorem.
It~is convenient to represent this space as the direct sum of two infinite-dimensional subspaces.
The first subspace is of the kind that is common for linear systems of differential equations.
It~consists of the (first-order) linear conservation laws of~$\mathcal L$.
Such conservation laws are necessarily of order one, and their (reduced) characteristics are of order $-\infty$.
For~$\mathcal L$ as the Euler--Lagrange equation of the Lagrangian~$\mathrm L$,
these characteristics are characteristics of generalized symmetries of order $-\infty$ of~$\mathcal L$,
which constitute the algebra~$\tilde\Sigma^{-\infty}$
and are associated with the linear superposition of solutions of~$\mathcal L$.
The second subspace is specific and is exhausted by the quadratic conservation laws of~$\mathcal L$.
They admit linear characteristics being characteristics of linear variational symmetries from the algebra~$\tilde\Lambda^{\rm q}_-$.
We have derived canonical representatives of two kinds for conserved currents contained in quadratic conservation laws.
The first kind of representatives is uniform for all quadratic conservation laws
and is convenient in the course of the study
how generalized symmetries of the equation~$\mathcal L$ act on its conservation laws.
It was an unexpected result for us that the so huge space of conservation laws of diverse structures is generated,
under the action of generalized symmetries, by a single first-order quadratic conservation law.
We have also computed a conserved current of minimal order for each basis quadratic conservation law.
For computational and presentation reasons, in the course of this computation
we partition the chosen basis of variational symmetries of nonnegative order into four families,
which leads to the associated partition for quadratic conservation laws.
We have constructed conserved currents of minimal order for two of these four families of conservation laws
and then used the permutation of~$x$ and~$y$,
which is a discrete point symmetry transformation of~$\mathcal L$,
to obtain conserved currents of minimal order for the other two families from the constructed ones.

An additional advantage of using the operators~$\mathrm D_x$ and~$\mathrm D_y$
over their rivals~$\mathscr D_x$ and $\mathscr D_y$
is a more clear insight into generalizing results of the present paper
to the multi-dimensional Klein--Gordon equation.
In view of the greater number of independent variables,
it possesses more translations and Lorentz transformations (usual and hyperbolic rotations)
than the equation~$\mathcal L$ does
but the principal structure of the algebra of generalized symmetries
should be similar to that for~$\mathcal L$, 
cf.\ \cite{Eastwood2005,Nikitin1991,NikitinPrylypko1990,ShapovalovShirokov1992}.
The techniques applied in the present paper for singling out variational symmetries
and computing associated conserved currents of minimal order may still be employed
for constructing the entire space of conservation laws of the multi-dimensional Klein--Gordon equation,
including the translation-noninvariant ones, which were not considered in~\cite{Kibble1965,Tsujishita1979}.

\section*{Acknowledgments}

The authors are grateful to the anonymous reviewer for valuable remarks, 
and to Alexander Bihlo, Anatoly Nikitin, Dmytro Popovych and Artur Sergyeyev for useful discussions and interesting comments. 
This research was undertaken, in part, thanks to funding from the Canada Research Chairs program and the NSERC Discovery Grant program.
The research of ROP was supported by the Austrian Science Fund (FWF), projects P25064 and P30233.
ROP is also grateful to the project No.\ CZ.$02.2.69\/0.0/0.0/16\_027/0008521$
``Support of International Mobility of Researchers at SU'' which supports international cooperation.

\footnotesize\frenchspacing	 

\end{document}